\newcommand{\squishlisttight}{
 \begin{list}{$\bullet$}
  { \setlength{\itemsep}{0pt}
    \setlength{\parsep}{0pt}
    \setlength{\topsep}{0pt}
    \setlength{\partopsep}{0pt}
    \setlength{\leftmargin}{2em}
    \setlength{\labelwidth}{1.5em}
    \setlength{\labelsep}{0.5em}
} }
\newcommand{\squishnumlist} {
\newcounter{qcounter}
\begin{list}{\arabic{qcounter}.~}{\usecounter{qcounter}} 
{  \setlength{\itemsep}{0pt}
    \setlength{\parsep}{0pt}
    \setlength{\topsep}{0pt}
    \setlength{\partopsep}{0pt}
    \setlength{\leftmargin}{2em}
    \setlength{\labelwidth}{1.5em}
    \setlength{\labelsep}{0.5em}
}}
\newcommand{\squishend}{
  \end{list}
}
\newcommand{\eat}[1]{}
\newcommand{\eatTKDE}[1]{}
\newcommand{\TKDE}[1]{#1} 
\newcommand{\revise}[1]{#1}
\newcommand{\kw}[1]{{\ensuremath {\mathsf{#1}}}\xspace}
\newcommand{\stitle}[1]{\vspace{1ex} \noindent{\bf #1}}
\long\def\comment#1{}
\newcommand{\blue}[1]{\textcolor{blue}{#1}}
\newcommand{\score}{\kw{f}}
\newcommand{\feq}{\kw{feq}}
\newcommand{\dis}{\kw{cor}}
\newcommand{\rep}{\kw{rep}}
\newcommand{\smy}{\kw{smy}}
\newcommand{\g}{\kw{g}}
\newcommand{\dist}{\kw{dist}}
\newcommand{\smyp}{\kw{kWTS}-\kw{problem}}
\newcommand{\kVDO}{\kw{kWTS}-\kw{problem}}
\newcommand{\kVDOmodel}{\kw{kWTS}}
\newcommand{\anc}{\kw{anc}}
\newcommand{\dec}{\kw{des}}
\newcommand{\des}{\kw{des}}
\newcommand{\greedy}{\kw{GTS}}
\newcommand{\FEQ}{\kw{FEQ}}
\newcommand{\AGG}{\kw{AGG}}
\newcommand{\CAGG}{\kw{CAGG}}
\newcommand{\Baseline}{\kw{Baseline}}
\newcommand{\nb}{\kw{N}}
\newcommand{\vtree}{\kw{Vtree}}
\newcommand{\DP}{\kw{OTS}}
\newcommand{\brute}{\kw{Brute}-\kw{Force}}
\newcommand{\na}{\kw{na}}
\newcommand{\LCA}{\kw{LCA}}
\newcommand{\LCAs}{\kw{LCAs}}
\newcommand{\LATT}{\kw{LATT}}
\newcommand{\LNUR}{\kw{LNUR}}
\newcommand{\ANIM}{\kw{ANIM}}
\newcommand{\IMAGE}{\kw{IMAGE}}
\newcommand{\YAGO}{\kw{YAGO}}
\newcommand{\CD}{\kw{CD}}
\newcommand{\ALD}{\kw{ALD}}
\newcommand{\WC}{\kw{WC}}
\newcommand{\TS}{\kw{HDS}}
\newcommand{\revision}[1]{\blue{#1}} 
\theoremstyle{definition}
\newtheorem{definition}{Definition} 
\newtheorem{example}{Example} 
\newtheorem{theorem}{Theorem} 
\newtheorem{lemma}{Lemma}
\begin{document}

\title{Efficient and Optimal Algorithms for Tree Summarization with Weighted Terminologies}
\author{Xuliang~Zhu,
        Xin~Huang, Byron~Choi, 
        Jianliang~Xu,  William~K.~Cheung, \\ 
        Yanchun~Zhang, and~Jiming~Liu \\

\IEEEcompsocitemizethanks{\IEEEcompsocthanksitem X. Zhu, X. Huang, B. Choi, J. Xu, W. Cheung, and J. Liu are with the Department of Computer Science, Hong Kong Baptist University, Hong Kong, China.\protect\\
E-mail: \{csxlzhu,xinhuang,bchoi,xujl,william,jiming\}@comp.hkbu.edu.hk
\IEEEcompsocthanksitem Y. Zhang is with the Guangzhou University, China and Victoria University, Australia.\protect\\
E-mail: Yanchun.Zhang@vu.edu.au
}
\thanks{(Corresponding author: Xin Huang.)}
}

\IEEEtitleabstractindextext{
\begin{abstract}

Data summarization that presents a small subset of a dataset to users has been widely applied in numerous applications and systems. Many datasets are coded with hierarchical terminologies, e.g., gene ontology, disease ontology, to name a few.  
In this paper, we study the weighted tree summarization. We motivate and formulate our \kVDO as selecting a diverse set of $k$ nodes to \underline{s}ummarize a hierarchical \underline{t}ree $T$ with \underline{w}eighted terminologies. We first propose an efficient greedy tree summarization algorithm \greedy. It solves the problem with $(1-1/e)$-approximation guarantee. 
Although \greedy achieves quality-guaranteed answers approximately, but it is still not optimal. To tackle the problem optimally, we further develop a dynamic programming algorithm \DP to obtain optimal answers for \kVDO in  $O(nhk^3)$ time, where $n, h$ are the node size and height in  tree $T$. The algorithm complexity and correctness of \DP are theoretically analyzed. In addition, we propose a useful optimization technique of tree reduction to remove useless nodes with zero weights and shrink the tree into a smaller one, which ensures the efficiency acceleration of both \greedy and \DP in real-world datasets. 
Moreover, we illustrate one useful application of graph visualization based on the answer of $k$-sized tree summarization and show it in a novel case study. 
Extensive experimental results on real-world datasets show the effectiveness and efficiency of our proposed approximate and optimal algorithms for tree summarization. Furthermore, we conduct a usability evaluation of attractive topic recommendation on ACM Computing Classification System dataset to validate the usefulness of our model and algorithms.

\end{abstract}

\begin{IEEEkeywords}
Hierarchy, Tree, Data Summarization, Optimal Algorithm, Top-k.
\end{IEEEkeywords}}

\maketitle   
\IEEEpeerreviewmaketitle

\section{introduction}\label{sec.intro}

A hierarchy data model is commonly used to depict the terminologies and their hierarchical relationships,  such as gene ontology~\cite{gene}, disease ontology~\cite{disease-ontology}, the International Classification of diseases-9~\cite{icd9}, medical subject heading, systematized nomenclature of medicine-clinical terms~\cite{SNOMED_CT}, and also the ACM Computing Classification System~\cite{ACMCCS}. 
Beside the topological structure of hierarchies, the terminologies are usually associated with vertex weights in a large number of real applications. For example, in biomedicine, the weight of terminologies obtained from literature search tools or electronic health records (EHR) are usually aggregated by events, such as the occurrences of diseases, and the number of search terms~\cite{jing2011graphical,jing2014complementary}; in academic, the  weight of a research topic terminology is the total number of papers published in this topic, e.g., the number of papers published in database venues. Such terminologies with hierarchical structures are often modeled as trees or directed acyclic graphs. Therefore, we consider \emph{a hierarchy data with weighted terminologies} as \emph{a weighted tree} throughout this paper.

However, real-world hierarchy data is often large-scale with numerous terminologies.
For instance, as of 2011, the  systematized nomenclature of medicine-clinical terms contains more than 311,000 medical concepts~\cite{SNOMED_CT}.
This brings significant difficulty for users to understand the essence of terminologies, even with the aid of a direct visualization tool. It is impossible to explore them interactively. Therefore, it desires to design efficient and effective algorithms for data summarization on such weighted hierarchies~\cite{jing2011graphical,wu2017finding,zhu2020top}, which gives a small-scale representation to summarize the whole dataset. A good summarization of weighted hierarchical data can benefit a wide range of applications such as summarized recommendation~\cite{zhu2020top}, visual data exploration~\cite{akoglu2012opavion, zhu2020hdag}, and snippet generation for information search~\cite{fakas2015diverse}.

\begin{figure}[t]
\small
\scriptsize
\centering
\includegraphics[width=1.01\linewidth]{./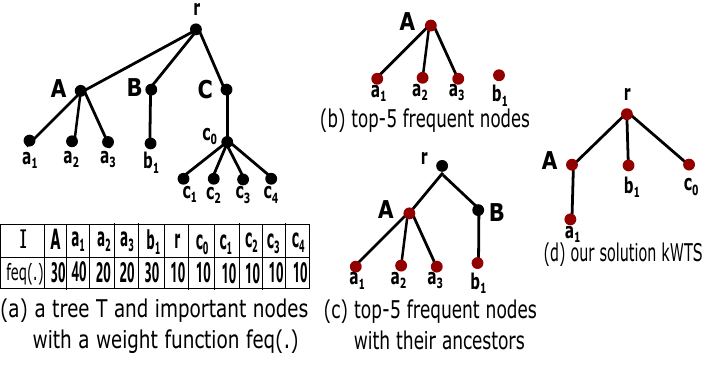}
\vspace{-0.6cm}
\caption{A running example of tree summarization with weighted terminologies.}
\label{fig.example}
\vspace{-0.5cm}
\end{figure}

We motivate our problem of tree summarization and illustrate one typical application of graph visualization on disease ontology with weighted terminologies. For instance, Figure \ref{fig.example}(a) shows one sample example of disease ontology. The nodes $r, A, a_1, ...$ represent disease terminologies. The edges represent the instance relationship, e.g., $(r, A)$ indicates that $A$ is an instance of $r$. In general, the disease (node $r$) includes mental health disease (node $A$), syndrome disease (node $B$), and cellular proliferation disease (node $C$). Furthermore, the diseases of cellular proliferation (node $C$) have one instance of cancer (node $c_0$). In the third level, the types of cancers (node $c_0$) can be categorized into cells (node $c_1$), organ systems (node $c_2$), and so on. Given a table of node weights that record the occurrence of diseases in a hospital (see the table in Figure \ref{fig.example}(a)), one may seek a summary report that presents a clear structure of frequent diseases.
 Obviously, if we show all diseases in the disease ontology, it is beyond the human cognition ability to distinguish any clear structure.  Thus, we consider how to select a small set of $k$ (e.g., $k=5$) important and representative elements to summarize the entire dataset. The simplest approach is to pick the most frequent elements. However, as this approach does not make use of hierarchical terminologies, we cannot see the inter-relationships between the selected elements in the resulted summary (see Figure \ref{fig.example}(b)).
An improved approach is to also include all the ancestors of the top-$k$ elements in the terminological structure (see Figure \ref{fig.example}(c)).  While this improved approach provides a more intuitive summary, it still suffers from two drawbacks. First, the summarization may lack diversity and miss specific but small groups (e.g., $c_1$, $c_2$, $c_3$, and $c_4$), which might yield inaccurate summarization for users. Second, similar elements are not summarized in a high-level concept. Moreover, to show all ancestors of frequent elements, a large summarization might be resulted, e.g., Figure \ref{fig.example}(c) has 7 nodes, which is greater than the given $k$. In contrast, Figure \ref{fig.example}(d) depicts a better summarization of the input dataset that describes four types of diseases (including $A$, $a_1$, $b_1$, and $c_0$), where element $a_1$ with the highest frequency represents a large proportion of type-$A$ diseases. This summarization in graph visualization offers direct, simplified, intuitive and human-friendly images to help users understand the overview of this analyzed disease dataset. 

In this paper, we investigate the problem of selecting a diverse set of vertices to summarize a weighted tree where vertices have non-negative weights. 
Formally, we formulate the \smyp, that is, given a tree $T$ with weighted terminologies and a positive integer $k$, finding a set of $k$ representative vertices to summarize the whole tree $T$ with the largest summary score. 
This new problem formulation is based on an objective function of summary score, taking into account the representativeness, diversity, and high-score coverage simultaneously. We provide a novel method of summarizing large tree datasets by reducing the original dataset to a manageable size. It intends to depict, highlight, and distinguish the important nodes and links within the hierarchal structure. To find high-quality summarized results, we propose a simple but efficient algorithm \greedy. \greedy is a greedy algorithm based on a well-designed greedy strategy that iteratively adds a representative vertex with the largest summary contribution for the overall summary score, until the answer has $k$ representative vertices. The greedy method can achieve at least $(1-1/e)$-approximation of the optimal answer in terms of our objective function.

Over the conference version \cite{huang2017ontology} of this manuscript, we further investigate exact solutions to the tree summarization problem. We propose a dynamic programming algorithm to achieve optimal answers in Section~\ref{sec.dp}. The motivation is that although the existing greedy method \greedy~\cite{huang2017ontology} achieves the quality-guaranteed answers, the answers are still not optimal. However, finding optimal answers brings significant challenges. An intuitive approach is to enumerate all possible summary sets to find the best answer, which may incur expensive computations. In fact, there exist exact polynomial-time algorithms to tackle \kVDO. Therefore, we propose an algorithm \DP based on dynamic programming to optimally solve the problem. The general idea is to divide the \kVDO for a tree rooted by $r$ into multiple sub-problems on subtrees rooted by $r$'s children nodes. For the selection of root $r$, we have two choices of \emph{selecting} $r$ into answers or \emph{not selecting} $r$ into answers. The optimal solution is one of the best summary score among the above two choices. The above step can be repeatedly enumerated for each node as a root in a polynomial time. However, a straightforward implementation of the above dynamic programming algorithm may incur expensive computations. To improve the efficiency, we develop several useful optimization techniques including the using Knapsack dynamic programming techniques to tackle the exponential division enumeration, and reduce the number of all possible states using the closet ancestor. The time complexity of our dynamic algorithm \DP takes $O(nhk^3)$ time in $O(nhk^2)$ space, where $n, h$ are the node size and height in  tree $T$, respectively. We also theoretically analyze the correctness of \DP to achieve optimal answers. 
\revise{To summarize, we compare \greedy and \DP here. On one hand, \greedy finds approximate answers and runs faster than \DP, which is more particularly suitable to give quick summarization answers. 
\greedy supports the zoom-in and zoom-out functions by freely adjusting the parameter $k$ for $k$-sized summarization in real time. 
On the other hand, \DP achieves optimal solutions by taking more cost than \greedy, which is more suitable in those critical applications for quality-priority. Our comprehensive solutions provide the choices to achieve a balanced trade-off between quality and efficiency. }

In addition, we further develop tree reduction techniques to accelerate computations in Section~\ref{sec.vtree}, which is another new technical contribution over \cite{huang2017ontology}. The tree reduction is based on an important observation that a large number of vertices have zero-weights in tree $T$. These vertices with zero-weights may be unimportant for tree summarization, which can be removed from $T$. We then propose a tree reduction method to delete them and shrink the whole tree into a small tree $T^*$, which contains a few nodes with non-zero weights. Our \DP applied on the reduced tree $T^*$ achieves the same optimal solution as the original tree $T$, but runs much faster in practice and also in theoretical time complexity analysis. 
The efficiency and effectiveness of our proposed  
tree reduction algorithm are validated by extensive experiments on real-world datasets. 

To summarize, this paper makes the following contributions:

\squishlisttight
\item We motivate and formulate the problem of tree summarization, which aims at selecting a diverse set of $k$ representative vertices in a weighted tree. 
We identify the desiderata of a good tree summarization, admitting the representativeness, diversity, and high-score coverage simultaneously (Section~\ref{sec.problem}).
\item  We analyze the summary objective function. We  formally prove its monotonicity and submodularity properties, which offer the prospects for developing efficient and approximate algorithms (Section~\ref{sec.analysis}). 
\item We propose an efficient algorithm that can achieve at least $(1-1/e)$ of the optimal in terms of summary objective function. Moreover, we present a graphical visualization method to depict a hierarchical structure based on the obtained summary results, which has been reduced from the original tree to a manageable size. This graphical visualization intends to depict, highlight, and distinguish the important nodes and links within the hierarchal structure, which illustrates a useful application of our tree summarization problem (Section~\ref{sec.greedy}).
\item We develop an exact algorithm \DP based on dynamic programming to achieve optimal solutions for \kVDO. We further propose several optimization strategies for efficient implementation. 
We also analyze the algorithm correctness and complexity of \DP (Section~\ref{sec.dp}).
\item We propose a tree reduction technique to prune zero-weighted vertices in the tree. It can significantly reduce the tree size and generate a small new tree. Based on the newly generated tree, \DP is guaranteed to achieve the same optimal answers in a faster way  (Section~\ref{sec.vtree}).

\item We conduct extensive experiments on five real-world datasets to validate the efficiency and effectiveness of our proposed algorithm. Moreover, we show one case study and one usability evaluation on a real dataset of ACM Computing Classification System, reflecting the practical usefulness of our tree summarization model and algorithms, in terms of graph visualization and users' feedback (Section~\ref{sec.exp}).
\end{list} 

We discuss related work in Section \ref{sec.relate} and conclude the paper in Section~\ref{sec.con}.

\section{Related Work}\label{sec.relate}

Work closely related to our paper can be categorized into data summarization, 
graph visualization and interactive search, and top-$k$ diversification. 

\stitle{Data summarization.} There exist several studies on data summarization \cite{jing2014complementary, wu2017finding,tian2008efficient,noel2004managing, vcebiric2015query, DBLP:conf/icde/Gou0Z019, kumar2018utility, liu2014distributed, yang2011summary}. \cite{wu2017finding} finds a set of $k$ high-quality and diverse representatives for a surface, which does not consider the ontology structure associated with the data.  
In \cite{vcebiric2015query}, a semi-structured framework is developed to summarize RDF graphs. 
A novel sketch approach is proposed by Gou et al.~\cite{DBLP:conf/icde/Gou0Z019} to summarize the graph streams. It takes linear space and constant update time. Both of these two works design data structures to store and summarize graphs. 
Kumar and Efstathopoulos \cite{kumar2018utility} propose a method of computing utility to summarize and compress graphs. Liu et al. \cite{liu2014distributed} develops several distributed algorithms for graph summarization on the Giraph distributed computing framework. Most of these works use graph compression or subgraph mining to summarize the whole graph structural information. \emph{Different from the above studies, our work considers the problem of data summarization using ontology terminologies, and formulates it as an optimization problem.}
In addition, several works study data summarization on hierarchical data~\cite{agarwal2007efficient, jin2009tree, karloff2011parsimonious, ruhl2018cascading, kimsummarizing, zhu2020hdag, zhu2020top}. 
Agarwal et al. \cite{agarwal2007efficient} propose the parsimonious explanation model to summarize changes in dimension hierarchy. 
Kim et al. \cite{kimsummarizing} propose dynamic programming methods to creates a concise summary of hierarchical multidimensional data. Both \cite{agarwal2007efficient}\cite{kimsummarizing} focus on the changes between two different hierarchies. Recently, Zhu et al.~\cite{zhu2020top} studies a NP-hard problem of top-$k$ graph summarization on DAGs, which is a generalization of our tree summarization problem. \emph{Different from the heuristic graph summarization algorithms~\cite{zhu2020top}, we develop an optimal algorithm for tree summarization using new dynamic programming techniques. }

\stitle{Graph visualization and interactive search.} 
Many works have been carried out on studying graph visualization \cite{jing2011graphical, koutra2015perseus, akoglu2012opavion, DBLP:conf/icde/KrommydaKV19, wu2015efficient, hasani2018tableview, jiang2018vizcs,bhowmick2017graph,yi2017autog}. The problem of graphical visualization using ontology terminologies is investigated to filter the nodes whose aggregate frequencies are less than a given threshold \cite{jing2011graphical}. 
Perseus \cite{koutra2015perseus} is a large-scale graph system developed to enable the comprehensive analysis of large graphs and allow the user to interactively explore node behaviors. OPAvion~\cite{akoglu2012opavion} provide scalable and interactive workflow to accomplish complex graph analysis tasks. Most of these works~\cite{jiang2018vizcs,bhowmick2017graph,yi2017autog} design a graph visualization system to analyze the large scale graphs. 
\emph{Unlike the above graph visualization algorithms and systems, we find $k$ representative vertices to summarize the whole hierarchy.} 
In addition, graph interactive search~\cite{parameswaran2011human, tao2019interactive, zhu2021budget} study a crowdsourcing task to identify the target labels of a given object in a label hierarchy, which allows asking users for a few questions. Recently, Zhu et al.~\cite{zhu2021budget} propose a dynamic programming based algorithm to ask one question with the maximum gain based on $k$ targets. The targets are fixed but unknown in advance. The hierarchy has no vertex weights. \emph{Compare with \cite{zhu2021budget}, although our dynamic programming techniques adopt a similar idea of the Knapsack problem as \cite{zhu2021budget}, we focus on a different problem of tree summarization, which finds a $k$-sized summary vertex set with the largest summary score on a weighted tree where vertices have weights. Moreover, we propose efficient tree reduction techniques especially for tree summarization, which cannot be applied on interactive search problem.}

\stitle{Top-k diversification.} In the literature, a large number of work studies the diversification of top-$k$ query results \cite{qin2012diversifying,ranu2014answering, yang2016diversified, yuan2016diversified, catallo2013top, zhou2010solving, fan2013diversified, li2012measuring}.  A comprehensive survey of top-$k$ query processing can be found in \cite{ilyas2008survey}. A general diversified top-$k$ search problem is defined by Qin et al. \cite{qin2012diversifying}, which only considers the similarity of the search results themselves. In \cite{ranu2014answering}, Ranu et al. propose an index structure NB-Index. It can solve the top-$k$ representative queries on graph databases. \cite{yuan2016diversified} finds top-$k$ maximal cliques which can cover most number of vertices. These works study the top-k diversification on graph databases, subgraph queries, and cliques. The key distinction with these existing studies is that our approach takes a flexible method to 
we investigate a different problem of finding a small set of $k$ nodes to summarize the whole tree with weighted terminologies. 
\section{problem statement}\label{sec.problem}

In this section, we define basic notions and formalize our problem.

\subsection{Preliminaries}  \label{sec.pre}
We consider a finite set of $n$ elements, $\mathcal{V}$, where the elements with inter-relations are organized into a tree-like structure. 
Let a weighted tree $T=(\mathcal{V}, E, \feq)$ be rooted at $r\in \mathcal{V}$, where 
$E\subseteq \mathcal{V}\times\mathcal{V}$ 
is the edge set and $\feq$ is the node weight function. The tree $T$ contains $n=|\mathcal{V}|$ nodes and $n-1 =|E|$ edges. In addition, the node weight $\feq(v)\in \mathbb{R}^{\geq0}$ is a non-negative real value, which presents the importance of node $v$. We denote an important set of positive nodes as $\mathcal{I}=\{v\in \mathcal{V}: \feq(v)>0\} \subseteq \mathcal{V}$, representing the set of all nodes $v$ with positive weights. For each node $v$ in $T$, we respectively denote the ancestors of node $v$ by $\anc(v)$ and the set of descendants of node $v$ by $\dec(v)$. Note that, we denote that $\anc(v)$ and $\dec(v)$ always contain $v$ throughout this paper, i.e., $v\in \anc(v)$ and $v\in \dec(v)$. 
Furthermore, we denote the children of node $v$ by $\nb^{-}(v) =\{u\in \mathcal{V}: (u, v)\in E, u\notin \anc(v)\}$. A children node $u\in \nb^{-}(v)$ is only one level below the node $v$ in tree $T$.  A node with $|\nb^{-}(v)|=0$ is called a leaf node. 

\begin{definition}[Node Level] Given a tree $T$ rooted at $r$, the level of a tree node $v\in \mathcal{V}$ is the number of hops between $v$ and $r$, denoted by $l(v)$.
\end{definition}

For example, consider a tree $T$ in Figure \ref{fig.example}(a).  
For node $C$, the set of descendants of $C$ is $\dec(C)=\{C, c_0, c_1, c_2, c_3, c_4\}$, and the set of ancestors is $\anc(C)=\{r, C\}$. The level of node $C$ is $l(C)=1$, and the level of node $c_2$ is $l(c_2)=3$. 

\stitle{Desiderata of a good summarization. }
Given a weighted tree $T=(\mathcal{V}, E, \feq)$  and an important set of positive nodes $\mathcal{I} \subseteq \mathcal{V}$, our goal, intuitively, is to select a small set of elements $S$ from $\mathcal{V}$ that depicts a good summarization of the high-score data of $\mathcal{I}$ by satisfying the following three criteria:
\squishnumlist
\item (Diversity) The elements of $S$ should not be very similar;
\item (Small-scale) The size of $S$  is small enough to be easily understood; 
\item (High-score Coverage and Correlation) A summary score function 
$\g(S)$ that measures the coverage and correlation of $S$ on important nodes $\mathcal{I}$,  is high. 
\end{list} 

\subsection{Summary Score Function}
In this subsection, we propose a summary score function \eatTKDE{$\score_{\mathcal{S} }(\mathcal{I} )$} \TKDE{$\g(S)$} by
formalizing the desiderata of diversity, high-score coverage, and correlations in a unified way. We first give the definitions of coverage and correlation below.

\stitle{Coverage.} Given two nodes $x, y$ in tree $T$, we say $x$ covers $y$ if and only if $x$ is one ancestor of $y$, i.e., $y\in \dec(x)$.
In the concept tree $T$, $x$ covers $y$, indicating that $x$ is a more general concept than $y$. This shows that $x$ can be a summary representative of $y$ in a higher level of concept understanding.  For instance,  in Figure \ref{fig.example}(a), node $c_0$ covers a set of nodes \{$c_1, c_2, c_3, c_4$\}, which means $c_0$ can be a good summary of all concepts in \{$c_1, c_2, c_3, c_4$\}.

\stitle{Representative Impact.} Based on the definition of coverage, we define the representative impact as follows.

\begin{definition} [Representative Impact] \label{def.rep}
Given two elements $x, y$ and $y\in \dec(x)$,
we define the representative impact of $x$ on the element $y$ using a function $\rep_{x} :$ 
$$\rep_{x}(y) =  \feq(y)\cdot \dis_x(y)$$,
where $\dis_x: \mathcal{V}  \rightarrow \mathbb{R}^{\ge 0}$ is the summarized relevance function.
\end{definition}

Here, $x$ serves as a candidate representative of $y$. The summarized impact of $x$ on $y$ is proportional to $\feq(y)$, the node weight of $y$, and is discounted by $\dis_x(y)$. Specifically, the summarized relevance of $x$ achieves the maximum at $y=x$, and decreases for $y$ further away from $x$. Note that, if $x$ does not cover $y$, i.e., $y\notin \dec(x)$, then $\dis_x(y) =0 $ and certainly $\rep_{x}(y) = 0$. 
In this paper, we suggest one natural choice of correlation function 
\begin{equation}\label{eq:nm0}
\dis_x(y) =
\left\{
\begin{aligned}
\frac{1}{l(y)-l(x)+1}, \  \text{if }~ y\in \dec(x) \\
0,~~~~~~~~~~~~~~~~~~~~~~~\text{otherwise }
\end{aligned}
\right.
\end{equation}
For example, consider the tree $T$  and the  weight function of elements as $\feq(\cdot)$ in Figure \ref{fig.example}(a). For nodes $B$ and $b_1$ with the level $l(B)=1$ and $l(b_1)=2$,  the summarized relevance of $B$ on $b_1$ is $\dis_B(b_1) = 1/2$, and thus representative impact of $B$ on $b_1$ is $\rep_{B}(b_1) =  \feq(b_1)\cdot \dis_B(b_1) = 30\times 1/2= 15$. On the other hand, the summarized relevance of $r$ on $b_1$ is $\dis_{r}(b_1) = 1/3$, and the representative impact $\rep_{r}(b_1)=10 < \rep_{B}(b_1)$, indicating that $B$ is a better summarized representative outperforming $r$, due to the more specification of $B$ compared to $r$. Our models can adopt other settings of $\dis_x(y)$ satisfying the principle of summarized relevance, and also our proposed techniques can be easily extended to solve a variant of problems with different  $\dis_x(y)$ functions. 

\stitle{Summary score.}  Given a set $S \subseteq \mathcal{V}$ of representative elements, we define the summary score of $S$ on an input element $y\in \mathcal{V}$, denoted by $\smy_{S}(y)$, as the maximum impact $y$ among all individual representatives:
\begin{equation}
\smy_{S}(y) = \max_{x\in S \cap \anc(y)} \rep_{x}(y).
\end{equation}

Intuitively, each input element $y$ is to be represented by some ancestor of $y$ that appears in $S$ (a.k.a. $x\in S\cap \anc(y)$) and has the maximum summary impact on $y$. Based on the definition of summary score, the total summary impact of $\mathcal{S}$ on all elements of $\mathcal{I}$ is defined as:
\begin{equation}
\g(S)= \sum_{y\in \mathcal{I}} \smy_{S}(y) = \sum_{y\in \mathcal{I}}\max_{x\in S\cap\anc(y)} (\feq(y)\cdot \dis_x(y)).
\end{equation}

To recap, the problem of \underline{t}ree \underline{s}ummarization with \underline{w}eighted terminologies (\smyp) studied in this paper can be formally formulated as follows.

\stitle{kWTS-problem}. Given a weighted tree $T=(\mathcal{V}, E, \feq)$, an important set of positive nodes $\mathcal{I} \subseteq \mathcal{V}$, and an integer $k\in \mathbb{Z}^{+}$, the problem is to find a set of representative nodes $S \subseteq \mathcal{V}$, such that $S$ achieves the maximum score $\g(S)$ with  $|S|= k$.

\begin{example}
We use the example in Figure \ref{fig.example} to illustrate our \smyp and set $k=5$. To summarize the tree with important set $\mathcal{I} =\{A, a_1, a_2, a_3, b_1, r, c_0, c_1, c_2, c_3, c_4\}$ in Figure \ref{fig.example}(a), an optimal solution is the summary graph $S=\{r, A, a_1, b_1, c_0\}$ in  Figure \ref{fig.example}(d). For node $a_1\in \mathcal{I}$, the best representative of $S$ is $a_1$ and the summary score of $S$ on $a_1$ is $\smy_{S}(a_1)=40\times1=40$. Overall, the total summary score of $S$ is $\g(S)=\sum_{x \in \mathcal{I}} \smy_S(x) = \rep_A(A) + \rep_{a_1}(a_1) + \rep_{A}(a_2) + \rep_{A}(a_3) + \rep_{b_1}(b_1) + \rep_{r}(r) + \rep_{c_0}(c_0) + \rep_{c_0}(c_1) + \rep_{c_0}(c_2) + \rep_{c_0}(c_3) + \rep_{c_0}(c_4) = 30 + 40 + 10 + 10 + 30 + 10 + 10 + 5 + 5 + 5 + 5 = 160$. 
\end{example}

\begin{table}[t]
\centering
\caption{Frequently used notations.}\label{tab.notat}
\vspace{-0.3cm}
\scalebox{1.0}{
\begin{tabular}{|l|l|}
\toprule
Notation &  Description \\
\midrule
$\feq(v)$& the importance of vertex v\\
$\mathcal{I}$& the set of vertices with $\feq(v) > 0$\\
$\anc(v)$/$\des(v)$& the set of ancestors/descendants of vertex v\\
$\nb^{-}(v)$& the set of children of vertex v\\
$\l(v)$& the level of vertex v\\
$\dis_u(v)$& the correlation impact of u on v\\
$\rep_u(v)$& the representative score of u on v\\
$\g(S)$& the summary score of S for all vertices\\
$\smy_S(v)$& the summary score of S on the vertex v\\
$\triangle_{g}(x|S)$& $\triangle_{g}(x|S)= \g(S\cup\{x\}) - \g(S)$\\
$T_u$& the subtree rooted with $u$\\
$S_u^k$& the summary set of selecting $k$ vertices in $T_u$\\
$\DP(u, k, S)$& the largest summary score $\g(S_u^k\cup S)$ in $T_u$\\
$\mathcal{Y}(u ,k, S)$/$\mathcal{N}(u ,k, S)$& $\DP(u, k, S)$ with/without selecting $u$ in $S_u^k$\\
\bottomrule
\end{tabular}
}
\vspace{-0.3cm}
\end{table}

\section{Problem Analysis}\label{sec.analysis}

In this section, we analyze the properties of the objective score function of our problem.

\stitle{Monotonity and Submodularity} \label{sec.modular}
A set function $f: 2^U \rightarrow \mathbb{R}^{\ge 0}$ is said to be submodular provided for all sets $S\subset T\subset U$ and element $x\in U\setminus T$, $f(T\cup\{x\}) - f(T) \le f(S\cup\{x\}) - f(S)$, i.e., the marginal gain of an element has the so-called ``diminishing returns'' property.

\begin{lemma}
$\g$ is monotone, i.e., for all $S_1, S_2 \subseteq \mathcal{V}$  such that $S_1 \subseteq S_2$, we have $\g(S_1) \leq \g(S_2)$.
\end{lemma}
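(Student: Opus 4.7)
The plan is to prove monotonicity pointwise at the level of individual summary scores $\smy_S(y)$, and then sum over $y \in \mathcal{I}$. The intuition is straightforward: enlarging the set of candidate representatives can only enlarge the pool of ancestors in $S$ from which the per-node maximum is taken, and since every representative impact $\rep_x(y) = \feq(y) \cdot \dis_x(y)$ is non-negative, a larger pool never decreases the pointwise maximum.

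Concretely, first I would fix an arbitrary $y \in \mathcal{I}$ and observe that $S_1 \subseteq S_2$ implies $S_1 \cap \anc(y) \subseteq S_2 \cap \anc(y)$. Then I would handle two cases. If $S_1 \cap \anc(y) = \emptyset$, I would invoke the convention that the empty max equals $0$ so that $\smy_{S_1}(y) = 0 \leq \smy_{S_2}(y)$, where the latter inequality follows from $\rep_x(y) \geq 0$ for all $x$ (since $\feq(y) \geq 0$ and $\dis_x(y) \geq 0$ by Equation~\eqref{eq:nm0}). If $S_1 \cap \anc(y) \neq \emptyset$, I would pick the maximizer $x^\star \in S_1 \cap \anc(y)$ witnessing $\smy_{S_1}(y)$, note that $x^\star \in S_2 \cap \anc(y)$ as well, and conclude
\begin{equation*}
\smy_{S_1}(y) = \rep_{x^\star}(y) \leq \max_{x \in S_2 \cap \anc(y)} \rep_x(y) = \smy_{S_2}(y).
\end{equation*}

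Finally, I would sum the pointwise inequality $\smy_{S_1}(y) \leq \smy_{S_2}(y)$ over all $y \in \mathcal{I}$ to obtain $\g(S_1) \leq \g(S_2)$, completing the proof.

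There is essentially no technical obstacle here: the entire argument rests on the monotonicity of $\max$ under set enlargement together with the non-negativity of the weight function $\feq$ and the relevance discount $\dis_x(y)$. The only mild subtlety worth being explicit about is the convention for the empty maximum when no ancestor of $y$ lies in $S$; once this is pinned down, the proof is a one-line observation per node followed by a sum.
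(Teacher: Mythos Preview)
Your proposal is correct and follows essentially the same approach as the paper: both argue pointwise that enlarging $S$ can only increase each $\smy_S(y)$ (by monotonicity of $\max$ over a larger index set together with non-negativity of $\feq$ and $\dis$), and then sum over $y\in\mathcal{I}$. Your version is slightly more careful in explicitly handling the empty-ancestor case, which the paper's proof leaves implicit.
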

\begin{proof} 
Since $S_1 \subseteq S_2$, for any element $y\in \mathcal{I}$, $\max_{x\in S_2} \dis_x(y) \geq \max_{x\in S_1} \dis_x(y)$, which is trival. Now, we have
$\g(S_2) - \g(S_1) = \sum_{y\in \mathcal{I}}(\max_{x\in S_2} (\feq(y)\cdot \dis_x(y))) - \sum_{y\in \mathcal{I}}(\max_{x\in S_1} (\feq(y)\cdot \dis_x(y))) = \sum_{y\in \mathcal{I}} \feq(y) \cdot (\max_{x\in S_2} \dis_x(y) - \max_{x\in S_1} \dis_x(y) ))  \geq 0$. As a result, $\g(S_1) \leq \g(S_2)$ holds.
\end{proof}

Given a summary node $x\in S$, let the set of nodes that take $x$ as their summary node, denoted by $\Upphi_{S}(x) = \{y\in \dec(x): \smy_{S}(y) = \rep_{x}(y)\}$.

\begin{lemma}\label{lemma.submodular}
$\g$ is submodular.
\end{lemma}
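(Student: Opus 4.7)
The plan is to reduce the claim to a pointwise statement and then use the fact that submodularity is preserved under non-negative summation. Recall that
\[
\g(S) \;=\; \sum_{y \in \mathcal{I}} \smy_S(y) \;=\; \sum_{y \in \mathcal{I}} \max_{x \in S \cap \anc(y)} \rep_x(y),
\]
with the convention that $\smy_S(y) = 0$ whenever $S \cap \anc(y) = \emptyset$. Since $\mathcal{I}$ does not depend on $S$ and the coefficients are all nonnegative, it suffices to show that for every fixed $y \in \mathcal{I}$ the map $S \mapsto \smy_S(y)$ is submodular in $S \subseteq \mathcal{V}$; summing over $y$ then yields the lemma.

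For a fixed $y$, define the auxiliary function $r_y : \mathcal{V} \to \mathbb{R}^{\ge 0}$ by $r_y(x) = \rep_x(y)$ if $x \in \anc(y)$ and $r_y(x) = 0$ otherwise. With this extension, $\smy_S(y) = \max_{x \in S} r_y(x)$ (and $0$ if $S = \emptyset$). I would then invoke the standard fact that the maximum of a nonnegative function over a set is submodular. Concretely, for $S_1 \subseteq S_2 \subseteq \mathcal{V}$ and $z \in \mathcal{V} \setminus S_2$, the marginal gain at $y$ is
\[
\smy_{S_i \cup \{z\}}(y) - \smy_{S_i}(y) \;=\; \max\bigl(0,\; r_y(z) - \smy_{S_i}(y)\bigr),
\]
and since $\smy_{S_1}(y) \le \smy_{S_2}(y)$ by the monotonicity lemma, the right-hand side is at least as large for $i=1$ as for $i=2$. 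This is precisely the diminishing-returns inequality for the $y$-th summand.

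Summing the pointwise inequality over $y \in \mathcal{I}$ gives
\[
\g(S_1 \cup \{z\}) - \g(S_1) \;\ge\; \g(S_2 \cup \{z\}) - \g(S_2),
\]
establishing submodularity. The only subtle step is the bookkeeping when $S \cap \anc(y) = \emptyset$, which the extension to $r_y$ handles cleanly; beyond that, the proof is essentially an appeal to ``max of a nonnegative set function is submodular,'' combined with closure of submodularity under conic combinations. I do not expect a genuine obstacle here, so the main thing to get right is just the careful handling of the case when no ancestor of $y$ lies in $S$, and matching the notation with the $\Upphi_S(x)$ machinery introduced just before the lemma in case a more explicit accounting (summing marginal contributions over the $y$'s that switch their summary representative to the newly added $z$) is preferred in the write-up.
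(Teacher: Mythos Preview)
Your argument is correct. The pointwise identity $\smy_{S\cup\{z\}}(y)-\smy_S(y)=\max\bigl(0,\,r_y(z)-\smy_S(y)\bigr)$ together with monotonicity of $\smy_S(y)$ in $S$ gives diminishing returns for each summand, and summing over $y\in\mathcal{I}$ finishes the proof. The empty-ancestor case is handled by your extension $r_y$.

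The paper takes a different but equivalent organization: rather than arguing per $y$, it works with the ``switching set'' $\Upphi_{S'}(x)=\{y\in\dec(x):\smy_{S'}(y)=\rep_x(y)\}$ of elements whose best representative becomes the newly added $x$, shows $\Upphi_{T'}(x)\subseteq\Upphi_{S'}(x)$ when $S\subset T$, and then bounds the marginal gains by summing only over those sets. Your route is the textbook ``max of a nonnegative function is submodular, and submodularity is closed under conic combination'' argument; it is shorter and needs no auxiliary set. The paper's route makes explicit \emph{which} $y$'s contribute to the marginal gain, which dovetails with how $\Upphi_{S'}(x)$ is later used in Algorithm~\ref{algo:gain} to compute $\triangle_{\g}(x\mid S)$ efficiently---so their choice is partly expository. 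Either write-up is fine here.
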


\begin{proof}
Give two sets $S\subset T\subset \mathcal{V}$  and an element $x\in \mathcal{V}\setminus T$, let $T'= T\cup\{x\}$ and $S'= S\cup \{x\}$. We establish the correctness of Lemma \ref{lemma.submodular} by following three facts below. 

First, for any element $y\in \mathcal{V}$, $\smy_{T}(y) \geq \smy_{S}(y)$ and $\smy_{T'}(y) \geq \smy_{S'}(y)$ holds. Second, $\Upphi_{T'}(x)$ $\subseteq \Upphi_{S'}(x)$. Since $\forall y \in \Upphi_{T'}(x)$, we have $\rep_{x}(y) = \smy_{T'}(y) \geq  \smy_{S'}(y)$ and $\rep_{x}(y) \leq \smy_{S'}(y)$ for $x\in S'$. As a result, we obtain $\rep_{x}(y) = \smy_{S'}(y)$ and $y\in \Upphi_{S'}(x)$. Therefore, $\Upphi_{T'}(x)$ $\subseteq \Upphi_{S'}(x)$ holds. Third, we have $\g(T') - \g(T) =\sum_{y\in \mathcal{V}} (\smy_{T'}(y)- \smy_{T}(y))$ $= \sum_{y\in \Upphi_{T'}(x)} (\rep_{x}(y) -\smy_{T}(y))$. Thus, we can obtain $\g(S') - \g(S)$ 
$ = \sum_{y\in  \Upphi_{S'}(x)}$ $(\rep_{x}(y) -\smy_{S}(y))$ 
$\geq $ $\sum_{y\in \Upphi_{T'}(x))} $ $(\rep_{x}(y) -\smy_{S}(y))$
$\geq $ $\sum_{y\in \Upphi_{T'}(x))} $ $(\rep_{x}(y) -\smy_{T}(y))$
$=\g(T') - \g(T)$. As a result, $\g(S') - \g(S) \geq \g(T') - \g(T)$. 
\end{proof}

In view of the fact that $\g$ is monotone and submodular, we infer that the prospects for developing an efficient approximation algorithm using greedy strategies are promising.

\section{GTS Algorithm}\label{sec.greedy}

\begin{algorithm}[t]
\caption{\greedy($T$, $\mathcal{I}$, $k$)} \label{algo:greedy}
\begin{algorithmic}[1]
\Require A tree $T=(\mathcal{V}, E, \feq)$, an important node set $\mathcal{I} \subseteq \mathcal{V}$
, a number $k$.
\Ensure A set of $k$ summary elements $S$.

\State Let $S\leftarrow \emptyset$; 
\While{$|S|< k$}
	\State $x^{*} \leftarrow \arg\max_{x\in \mathcal{V}/S} \triangle_{g}(x|S)$;
	\State $S \leftarrow S \cup \{x^*\}$;
\EndWhile
\State \textbf{return} $S$;
\end{algorithmic}
\end{algorithm}

In this section, we present a greedy algorithm that can produce a solution  achieving at least  $(1- 1/e) \approx 62\%$ of the optimal score $g(S^*)$. In the following, we first give the framework of our greedy algorithm called \greedy. Then, we show its approximation guarantee and present several techniques for improving its efficiency. Finally, we discuss how to use the answer of selected vertices by \greedy to represent the whole tree $T$. 

\subsection{A Greedy Algorithm GTS}

\stitle{Marginal gain.} We begin with marginal gain. Monotonicity of function $\g$ implies that for any $S \subseteq \mathcal{V}$  and $x \in \mathcal{V}$, we have $\triangle_{g}(x|S)= g(S\cup\{x\}) - g(S) \geq 0$. The term $\triangle_{g}(x|S)$ is called the marginal gain 
of $x$ to the set $S$. We would like to add the node with the largest marginal gain into the answer. This greedy strategy motivates the following algorithm \greedy.

\stitle{Algorithm overview.}  \greedy starts out with an empty solution set $S=\emptyset$. In each subsequent iteration, \greedy iteratively adds one more summary node $x^*$ to solution $S$, which grows the answer set by one. This summary node $x^*$ is chosen
from the remaining candidate elements $\mathcal{V}/S$ such that it achieves the largest marginal gain, i.e., $x^{*} \leftarrow \arg\max_{x\in \mathcal{V}/S} \triangle_{g}(x|S)$. Finally, \greedy returns $S$ after $|S|= k$. The detailed description is presented in Algorithm \ref{algo:greedy}.

\stitle{Computing $\triangle_{g}(x|S)$.} We present an efficient algorithm (Algorithm \ref{algo:gain}) for computing the marginal gain $\triangle_{g}(x|S)$. 
Let $S'=S\cup \{x\}$, and $T_x$ be a subtree of $T$ rooted at $x$ (lines 1-2). The procedure computes $\Upphi_{S'}(x)$ by performing one traversal of tree $T_x$ and finding all nodes regarding $x$ as its new summary node. Afterwards, if we can find the nearest ancestor $z$ of $x$ in $S$, i.e. $\anc(x) \cap S \neq \emptyset$, and calculate the marginal gain $\triangle_{g}(x|S) =\sum_{y\in \Upphi_{S'}(x)} (\rep_{x}(y)-\rep_{z}(y))$; otherwise, if such an ancestor $z$ does not exist, the algorithm directly returns $\triangle_{g}(x|S) =\sum_{y\in \Upphi_{S'}(x)} \rep_{x}(y)$.

\begin{algorithm}[t]

\caption{Computing $\triangle_{g}(x|S)$} \label{algo:gain}
\begin{algorithmic}[1]
\Require A tree $T$, an important node set $\mathcal{I}$, a summary set $S$, a node $x\in\mathcal{V}$.
\Ensure $\triangle_{g}(x|S)$.

\State $S'\leftarrow S\cup \{x\}$;
\State Compute $\Upphi_{S'}(x)= \{y\in \dec(x): \smy_{S'}(y) = \rep_{x}(y)\}$;
\If {\anc(x) $\cap S \neq \emptyset$}
	\State Let $z\in S$ be the nearest ancestor of $x$;
	\State $\triangle_{g}(x|S) =\sum_{y\in \Upphi_{S'}(x)} (\rep_{x}(y)-\rep_{z}(y))$;
\Else
	\State $\triangle_{g}(x|S) =\sum_{y\in \Upphi_{S'}(x)} \rep_{x}(y)$;
\EndIf
\State \textbf{return}  $\triangle_{g}(x|S)$;
\end{algorithmic}
\end{algorithm}

\stitle{Approximation analysis. } \cite{nemhauser1978analysis} shows that a greedy algorithm provides a $(1-1/e)-$approximation for maximizing a monotone submodular set function with cardinality constraint. 
Our method \greedy is one instantiation of this algorithm for \smyp.

\begin{theorem}
Let $S$ be the answer obtained by \greedy, and $S^*$ be the optimal answer,
$\g(S) \geq (1-\frac{1}{e})\cdot \g(S^*)$ holds.
\end{theorem}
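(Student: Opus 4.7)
The plan is to invoke the classical Nemhauser–Wolsey–Fisher framework for maximizing a monotone submodular function under a cardinality constraint, specialized to our setting. The preceding lemmas already give the two essential ingredients: monotonicity of $\g$ and submodularity of $\g$ (Lemma~\ref{lemma.submodular}). The cardinality constraint $|S|=k$ is exactly the setting in which the greedy-with-largest-marginal-gain strategy of Algorithm~\ref{algo:greedy} is known to attain a $(1-1/e)$ factor. So the work reduces to making this chain of implications explicit for $\g$.

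First, I would set up notation. Let $S_0=\emptyset, S_1, \dots, S_k=S$ be the sequence of greedy solutions, where $S_{i+1}=S_i\cup\{x^*_{i+1}\}$ and $x^*_{i+1}=\arg\max_{x\in \mathcal{V}\setminus S_i}\triangle_g(x\mid S_i)$. Let $S^*=\{y_1,\dots,y_k\}$ be an optimal solution. The first key step is to upper-bound $\g(S^*)$ in terms of the greedy marginals at each iteration. By monotonicity, $\g(S^*)\le \g(S^*\cup S_i)$. Telescoping over any fixed ordering of $S^*\setminus S_i$ and applying submodularity element-by-element (each successive marginal is at most the marginal with respect to the smaller set $S_i$) yields
\begin{equation*}
\g(S^*)\le \g(S_i)+\sum_{y\in S^*\setminus S_i}\triangle_g(y\mid S_i)\le \g(S_i)+\sum_{y\in S^*}\triangle_g(y\mid S_i).
\end{equation*}

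The second step uses the greedy choice. Because $x^*_{i+1}$ maximizes the marginal over $\mathcal{V}\setminus S_i$, every $y\in S^*$ satisfies $\triangle_g(y\mid S_i)\le \triangle_g(x^*_{i+1}\mid S_i)=\g(S_{i+1})-\g(S_i)$, so since $|S^*|=k$ we obtain
\begin{equation*}
\g(S^*)-\g(S_i)\le k\,\bigl(\g(S_{i+1})-\g(S_i)\bigr).
\end{equation*}
Setting $\delta_i=\g(S^*)-\g(S_i)$ gives the recursion $\delta_{i+1}\le (1-1/k)\,\delta_i$, hence $\delta_k\le (1-1/k)^k\,\delta_0\le e^{-1}\g(S^*)$. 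Rearranging yields $\g(S)=\g(S_k)\ge (1-1/e)\,\g(S^*)$, which is the claim.

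The only nontrivial step is the first telescoping inequality: one must be careful that submodularity gives $\triangle_g(y\mid S_i\cup U)\le \triangle_g(y\mid S_i)$ for every intermediate prefix $U\subseteq S^*\setminus S_i$, so that the sum of marginals along any ordering of $S^*\setminus S_i$ bounds $\g(S^*\cup S_i)-\g(S_i)$ from above. Given Lemma~\ref{lemma.submodular}, this is a routine induction on $|U|$; everything else is arithmetic. Since $\g(\emptyset)=0$ (no element has a summary representative, so every $\smy_{\emptyset}(y)=0$), the bound $\delta_0=\g(S^*)$ is tight, and the $(1-1/e)$ guarantee follows with no further assumptions.
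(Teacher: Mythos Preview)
Your proof is correct and follows exactly the approach the paper intends: the paper itself does not give a self-contained argument but simply cites \cite{nemhauser1978analysis}, noting that \greedy is an instantiation of the greedy algorithm for maximizing a monotone submodular function under a cardinality constraint, so the $(1-1/e)$ bound follows from the preceding monotonicity and submodularity lemmas. You have spelled out the standard Nemhauser--Wolsey--Fisher telescoping argument in detail (including the observation $\g(\emptyset)=0$), which is more than the paper provides but entirely in line with its reasoning.
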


\stitle{Complexity analysis.} 
\eatTKDE{Assume that the height of tree $T$ is $h$, and a subtree of $T$ rooted at $x\in\mathcal{V}$ as $T_x$, where $|T_x| \leq |\mathcal{V}|=n$. The computation of $\triangle_{g}(x|S)$ consumes $O(|T_x|)$ time of the traversal of $T_x$ and $O(h)$ time to find the ancestors of $x$, which takes $O(|T_x|+h) \subseteq O(n)$ time in total. In Algorithm \ref{algo:greedy}, at each iteration of selecting the node with the maximum marginal gain, it computes $\triangle_{g}(x|S)$ at most $n$ times. As a result, the overall time complexity of Algorithm \ref{algo:greedy} is $O(n^2k)$ time in worst cases. The space complexity is $O(n)$.
}
Assume that the height of tree $T$ is $h$. A subtree of $T$ rooted at $x\in\mathcal{V}$ is denoted as $T_x$. The computation of marginal gain $\triangle_{g}(x|S)$ first takes $O(|T_x|)$ time for the subtree traversal of $T_x$. Then, it takes $O(\l(x))$ time to find the ancestors of $x$. Hence, the computation of $\triangle_{g}(x|S)$ takes $O(|T_x|+\l(x))$ time in total. At each iteration, Algorithm \ref{algo:greedy} selects a node with the maximum marginal gain, which needs to compute $\triangle_{g}(x|S)$ for all nodes $x$ in worst. To select a summary vertex, it costs $O(\sum_{x \in \mathcal{V}}{|T_x|}+\sum_{x \in \mathcal{V}}{\l(x)}) = O(2\sum_{x \in \mathcal{V}}{\l(x)}) \subseteq O(nh)$. As a result, the time complexity of Algorithm \ref{algo:greedy} is $O(nhk)$ time. The space complexity of Algorithm~\ref{algo:greedy} is $O(n)$.

\subsection{Graph Visualization based on Summary Answers} \label{sec.visualization}

In this section, we discuss how to use the obtained answer $S$ to summarize the whole tree $T$ in graph visualization. 
Based on the obtained $k$ representative vertices, it is ready to generate a small summary tree for graph visualization. We first create a virtual root $r$. We then start from each vertex $v\in S$ and add an edge path between $v$ to the lowest ancestor in $S$; If such an ancestor does not exist, we add an edge path between $v$ and the virtual root.

\begin{table}[t]
\centering
\caption{The running steps of \greedy 
applied on tree $T$ in Figure~\ref{fig.example}(a). It shows the marginal gains $\triangle_{g}(x|S)$ for partial vertices in $T$.}\label{table.greedy}
\vspace{-0.3cm}
\scalebox{1.0}{
\begin{tabular}{c|ccccccccc}
\toprule
Step& $r$& $A$& $B$& $C$& $a_1$& $a_2$& $a_3$& $b_1$& $c_0$ \\
\midrule
Step1& 65& {\textcolor{red}{\textbf{70}}}& 15& 18.3& 40& 20& 20& 30& 30\\
Step2& {\textcolor{red}{\textbf{33.3}}}& /& 15& 18.3& 20& 10& 10& 30& 30\\
Step3& /& /& 5& 5& {\textcolor{red}{\textbf{20}}}& 10& 10& 20& 16.6\\
Step4& /& /& 5& 5& /& 10& 10& {\textcolor{red}{\textbf{20}}}& 16.6\\
Step5& /& /& 5& 5& /& 10& 10& /& {\textcolor{red}{\textbf{16.6}}}\\
\bottomrule
\end{tabular}
}
\vspace{-0.5cm}
\end{table}

\begin{example}
We use the tree $T$ in Figure~\ref{fig.example}(a) to illustrate the running steps of \greedy algorithm and graph visualization using summary answers. Suppose that $k=5$. We apply Algorithm~\ref{algo:greedy} on $T$. 
Table~\ref{table.greedy} shows the marginal gains $\triangle_{g}(x|S)$ of vertices $x\in V$ without $c_1, c_2, c_3$ and $c_4$. At the first step, we calculate $\triangle_{g}(x|S)$ of all vertices in $T$ and then choose the vertex $A$ with the largest marginal gain $\triangle_{g}(A|S) = 70$ for $S=\emptyset$. Next, we update the $\triangle_{g}(x|S)$ for remaining vertices and choose the vertex $r$ with the largest marginal gain $\triangle_{g}(r|S) = 33.3$ for $S=\{A\}$ at the second step. Similarly, we select other three vertices $a_1, b_1, c_0$ as answers and the summary set is $S = \{A, r, a_1, b_1, c_0\}$. Finally, we use five representative vertices $S$ to depict the summarized graph visualization as shown in Figure~\ref{fig.example}(d). We connect vertex $a_1$ to vertex $A$ by adding an edge $(A, a_1)$ as $A$ is the lowest ancestor of $a_1$ in $S$. Similarly, we connect three vertices $A, b_1, c_1$ to root $r$ by adding edges and skip their connections to vertices $B, C$ that do not belong to the answer $S$. 
\end{example}
\section{OTS algorithm} \label{sec.dp}

In this section, we introduce an exact algorithm for tree summarization, which finds an optimal answer $S^*$ of $k$ representative nodes to achieve the maximum summary impact, i.e., $S^* = \arg\max_{S\subseteq V, |S|=k} \g(S)$.  
To this end, we propose a dynamic programming algorithm \DP and develop several optimization techniques to accelerate the search process. Furthermore, we also analyze the complexity and correctness of \DP. 

\subsection{Solution Overview}

We first use a toy example to show the limitation of greedy algorithm \greedy, which is still far from the optimal answer. Consider the tree $T$ in Figure~\ref{fig.dp.example} and assume $k = 2$. The \greedy algorithm firstly selects the vertex $v_2$ with the maximum margin of $43$ and secondly selects the vertex $v_4$. \greedy generates the answer $S = \{v_2, v_4\}$, which achieves the summary score $\g(S) = 64$. However, this answer $\g(S) = 64$ is not optimal. The best answer is $S^* = \{v_3, v_4\}$ with the summary score $\g(S^*) = 81$. Actually, after selecting the vertex $v_4$, the marginal gain of $v_2$ is reduced and the vertex $v_3$ becomes a better selection. To dismiss the limitation of local optimality by greedy algorithm, we consider an alternative method of dynamic algorithm in terms of global optimality.

Figure~\ref{fig.dp.overview} shows an overview framework of our dynamic programming algorithm \DP. As shown in the above example, although \greedy is an efficient and approximate algorithm, it cannot find the optimal solution in some cases due to its local optimality. To find the global optimality, \DP considers a subproblem of finding $k^\prime \leq k$ representative nodes optimally in a subtree $T_u$ rooted by a vertex $u \in \des(r)$ as shown in Figure~\ref{fig.dp.overview}. Moreover, we consider to have an existing partial answer $S$ already and combine $S$ with another set $S_{u}^{k^\prime}$ of $k^\prime$ nodes to form a globally optimal answer, i.e., $|S\cup S_{u}^{k^\prime}| =k$. Obviously, let $S=\emptyset$, $u=r$, and $k^\prime =k$, thus this subproblem is the same as the original \kVDO. Thus, the problem is how to find additional $k^\prime$ optimal vertices in the subtree with selected set $S$. We consider two cases of whether we select vertex $u$ or not. On one hand, if we select vertex $u$ into the answer $S$, for each children node $v_1, v_2 ..., v_x$, the sub-problem is how to find additional $k_x$ optimal vertices in the subtrees rooted by $v_x$ with an existing answer $S \cup \{u\}$ and $\sum{k_x} \leq k^\prime - 1$; On the other hand, if we do not select vertex $u$ into the answer $S$, for each children node $v_1, v_2 ..., v_x$, the sub-problem is how to find additional $k_x$ optimal vertices in $T_{v_x}$ with an existing answer $S$ and $\sum{k_x} \leq k^\prime$. The optimal answer is the best solution among the above two answers. 

\begin{figure}[t]
\centering{
\subfigure[Tree $T$]{
\label{example.dp.1}
\includegraphics[width=0.28\linewidth]{./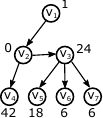} }
\quad
\subfigure[\greedy answer]{
\label{example.greedy}
\includegraphics[width=0.28\linewidth]{./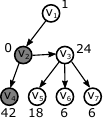} }
\quad
\subfigure[\DP answer]{
\label{example.dp.2}
\includegraphics[width=0.28\linewidth]{./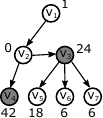} }
}
\caption{A motivation example of comparing different answers between \greedy and \DP. For the same tree $T$ in Figure~\ref{example.dp.1}, \greedy generates an answer $S = \{v_2, v_4\}$ with $\g(S) = 64$ in Figure~\ref{example.greedy}, which is worse than the optimal answer $S^* = \{v_3, v_4\}$ with $\g(S^*) = 81$ in Figure~\ref{example.dp.2}. }
\label{fig.dp.example}
\end{figure}

\begin{figure}[t]
\centering{
{
\label{dp.overview}
\includegraphics[width=0.4\linewidth]{./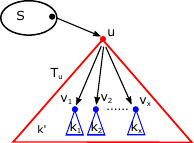} }
}
\caption{A solution overview of \DP algorithm.}
\label{fig.dp.overview}
\vspace{-0.3cm}
\end{figure}

\subsection{Dynamic Programming Algorithm}

In the following, we give the detailed formulations of states, sub-problems and the algorithm.

\stitle{States}. We begin with a definition of state $\DP(u, k, S)$ in dynamic programming. Given a tree $T$, a vertex $u\in V$, a number $k$, and a set of summary vertices $S \subseteq V$, $\DP(u, k, S)$ represents an optimal solution of the  \kVDO in a tree $T_u$.  
That is selecting the additional $k$ summary vertices $S_u^k$ from $T_u$ into $S$ to achieve the largest summary score $\g(S_u^k\cup S)$ in the tree $T_u$. Note that $S_u^k \subseteq \des(u)$ and $S \cap \des(u) = \emptyset$. 
An optimal answer of the \kVDO in $T$ is $\DP(r, k, \emptyset)$, where $r$ is the root of $T$.

\stitle{Divide a state into sub-problems}. For the state $\DP(u, k, S)$, we divide it into two sub-problems. For the root vertex $u$, we have the choice of two cases: \emph{Yes-case} and \emph{No-case}. Generally, the choice of \emph{Yes-case} is selecting $u$ into the existing answer $S$, denoted as $\mathcal{Y}(u,k, S)$; the other choice of \emph{No-case} is not selecting $u$ into $S$, denoted as  $\mathcal{N}(u,k, S)$. Intuitively, the best answer of $\DP(u, k, S)$ should be one between \emph{Yes-case} and \emph{No-case}, i.e., 
\begin{equation}\label{eq.max}
\begin{aligned}
\DP(u, k, S) = \max\{\mathcal{Y}(u,k, S), \mathcal{N}(u,k,S)\}, 
\end{aligned}
\end{equation}
where $\mathcal{Y}(u,k, S)$ and $\mathcal{N}(u,k, S)$ are respectively shown in Eq.~\ref{eq.Yes} and Eq.~\ref{eq.No}. 

For $\mathcal{Y}(u,k, S)$, it adds $u$ into $S$ and has a new summary set $S\cup\{u\}$. Thus, the summary score for vertex $u$ is obviously $\feq(u)$, as shown in the first term of Eq.~\ref{eq.Yes}. In addition, the number of candidate representative vertices decreases by one, i.e., $k-1$. The optimal solution of $\mathcal{Y}(u,k, S)$ needs to explore all possible assignments of $k-1$ representative vertices into the trees rooted by $u's$ out-neighbors (a.k.a. children), as shown in the second term of Eq.~\ref{eq.Yes}. Specifically, we have

\begin{equation}\label{eq.Yes}
\begin{aligned}
\mathcal{Y}(u,k, S) &=\feq(u)+ \max\{\sum_{x\in N^-(u)} \DP(x, k_x, S\cup\{u\})\}    \\ 
 &  
 \text{ subject to }  \sum_{x\in N^-(u)} k_x= k-1.\\
\end{aligned}
\end{equation}

 For $\mathcal{N}(u,k, S)$, it does not choose $u$ and has an unchanged summary set $S$. Thus, the summary score for vertex $u$ by $S$ is calculated as $\smy_S(u)$, as shown in the first term of Eq.~\ref{eq.No}. The number of candidate representative vertices is still $k$. The optimal solution of $\mathcal{N}(u,k, S)$ needs to explore all possible assignments of $k$ representative vertices into the trees rooted by $x\in N^-(u)$,  as shown in the second term of Eq.~\ref{eq.No}. Specifically, we have

\begin{equation}\label{eq.No}
\begin{aligned}
\mathcal{N}(u,k, S) &= \smy_S(u) + \max\{\sum_{x\in N^-(u)} \DP(x, k_x, S)\}  \\ 
 &  
 \text{ subject to }  \sum_{x\in N^-(u)} k_x= k.\\
\end{aligned}
\end{equation}

\stitle{OTS algorithm.} Algorithm~\ref{algo:dp} implements a dynamic programming algorithm for the \kVDO in a weighted tree $T$. The algorithm computes an optimal summary score $\g(u, k, S)$ for the state $\DP(u,$ $ k, S)$, which is recorded to conveniently use and avoid recomputing. If $\g(u, k, S)$ has been computed before, the score $\g(u, k, S)$ can be directly returned (line 8); Otherwise, it computes the state $\DP(u, k, S)$ dynamically (lines 2-7). The algorithm first checks the number $k$. If $k\geq 1$, it explores to select $k$ representative vertices in subtree $T_u$ via Eq.~\ref{eq.max} (line 3), by invoking two procedures of $\mathcal{Y}(u,k, S)$ in Eq.~\ref{eq.Yes} (lines 9-12) and  $\mathcal{N}(u,k, S)$ in Eq.~\ref{eq.No} (lines 13-16); Otherwise, for $k=0$, Algorithm~\ref{algo:dp} then computes the summary score equals $\smy_S(u) + \sum_{x \in \nb^{-}(u)}\DP(x, k, S)$, which is the representative score $\smy(S, u)$ add the sum of the summary score in trees rooted by $u$'s children (lines 5-7). Computing $\DP(r, k, \emptyset)$ by Algorithm~\ref{algo:dp}  produces an optimal answer $\g(r, k, \emptyset)$ for $T$. Note that $S_u^k$ is the union of all the selection sets $S_x^{k_x}$ for $x\in N^-(u)$.

\begin{algorithm}[t]
  \caption{\DP$(u, k, S)$} \label{algo:dp}
  \begin{algorithmic}[1]
    \Require A tree $T=(\mathcal{V}, E, \feq)$, an important node set $\mathcal{I}$, a vertex $u\in \mathcal{V}$, a number $k$, a set of summary vertices $S$.
    \Ensure An optimal summary score $\g(u, k, S)$.

    \If{$\g(u, k, S)$ has not been computed}
    \If{$k\geq 1$}
    \State $\g(u, k, S) \leftarrow \max\{\mathcal{Y}(u,k, S), \mathcal{N}(u,k, S)\}$;
    \Else
    \State $\g(u, k, S) \leftarrow \smy_S(u)$;
    \For{vertex $x \in \nb^{-}(u)$}
    \State $\g(u, k, S) \leftarrow \g(u, k, S)+ \DP(x, k, S)$;
    \EndFor
    \EndIf
    \EndIf
    \State \Return{$\g(u, k, S)$};
\vspace{-0.3cm}
    \Statex
    \Procedure {$\mathcal{Y}$}{$u,k, S$}
    \Statex //\emph{Yes-case: the answer $S_{\mathcal{Y}}$ contains  $u$.}
    \State $k_{\mathcal{Y}}\leftarrow k-1$; $S_{\mathcal{Y}} \leftarrow S\cup\{u\}$;
    \State Enumerate the assignment of $k_x$ for all vertices $x\in N^-(u)$  such that $\sum_{x\in N^-(u)} k_x= k_{\mathcal{Y}}$ to achieve the following optimization via Eq.~\ref{eq.Yes}:
    \Statex \text{\ \ \ \ \ } $OPT_\mathcal{Y} \leftarrow \max\{\sum_{x\in N^-(u)} \DP(x, k_x, S_{\mathcal{Y}})\}$;
    \State \Return{$\feq(u)+ OPT_\mathcal{Y}$};   
    \EndProcedure
\vspace{-0.3cm}
    \Statex
    \Procedure {$\mathcal{N}$}{$u,k, S$}
    \Statex //\emph{No-case: the answer $S_{\mathcal{N}}$ contains no $u$.}
    \State $k_{\mathcal{N}}\leftarrow k$; $S_{\mathcal{N}} \leftarrow S$;
    \State Enumerate the assignment of $k_x$ for all vertices $x\in N^-(u)$  such that $\sum_{x\in N^-(u)} k_x= k_{\mathcal{N}}$ to achieve the following optimization via Eq.~\ref{eq.No}:
    \Statex \text{\ \ \ \ \ } $OPT_\mathcal{N} \leftarrow \max\{\sum_{x\in N^-(u)} \DP(x, k_x, S_{\mathcal{N}})\}$;
    \State \Return{$\smy_S(u)+ OPT_\mathcal{N}$};  
    \EndProcedure
  \end{algorithmic}
\end{algorithm}

\begin{table}[t]
\centering
\vspace{-0.3cm}
\caption{The DP states of  $\DP(u, k, S)$ in Algorithm~\ref{algo:dp}.}\label{table:dp}
\vspace{-0.3cm}
\scalebox{0.9}{
\begin{tabular}{ccc|ccc|c}
\toprule
$u$& $k$& $S$& $\mathcal{Y}(u, k, S)$& $\mathcal{N}(u, k, S)$& \DP(u, k, S)& $S_u^k$ \\
\midrule
$v_7$& 0& $\{v_3\}$& /& {\textcolor{red}{\textbf{3}}}& 3& $\emptyset$\\
$v_6$& 0& $\{v_3\}$& /& {\textcolor{red}{\textbf{3}}}& 3& $\emptyset$\\
$v_5$& 0& $\{v_3\}$& /& {\textcolor{red}{\textbf{9}}}& 9& $\emptyset$\\
$v_4$& 1& $\emptyset$& {\textcolor{red}{\textbf{42}}}& 0& 42& $\{v_4\}$\\
$v_3$& 1& $\emptyset$& {\textcolor{red}{\textbf{39}}}& 9& 39& $\{v_3\}$\\
$v_2$& 2& $\emptyset$& 64& {\textcolor{red}{\textbf{81}}}& 81& $\{v_3, v_4\}$\\
$v_1$& 2& $\emptyset$& 57.5& {\textcolor{red}{\textbf{81}}}& 81& $\{v_3, v_4\}$\\
\bottomrule
\end{tabular}
}
\vspace{-0.5cm}
\end{table}

\begin{example}
Figure~\ref{example.dp.2} shows an example of applying \DP algorithm with $k = 2$ on $T$ in Figure~\ref{example.dp.1}. Table \ref{table:dp} shows the value and the selection set of some \DP state. The max value of $\DP(v_1, 2, \emptyset) = \DP(v_2, 2, \emptyset) = \DP(v_3, 1, \emptyset) + \DP(v_4, 1, \emptyset) = (\feq(v_3) + \DP(v_5, 0, \{v_3\})$ $ +$ $ \DP$ $(v_6, 0, \{v_3\}) + \DP(v_7, 0, \{v_3\})) + 42 = 24 + 9 + 3 + 3 + 42 = 81$. The selection set $S_1^2 = S_2^2 = S_3^1 \cup S_4^1 = \{v_3\} \cup \{v_4\} = \{v_3, v_4\}$.
\end{example}

\subsection{Implementing Optimizations}
In this section, we propose several useful optimizations to improve the efficiency of Algorithm~\ref{algo:dp}. This is because a straightforward implementation of Algorithm~\ref{algo:dp} takes $O(\sum_{v\in V} k\cdot \#\text{k-assign} \cdot \#S)$ $\subseteq$ $O(\sum_{v\in V}   k\cdot k^{|N^-(u)|} \cdot \binom{n}{k})$ time. For procedures $\mathcal{Y}(u,k, S)$  and  $\mathcal{N}(u,k, S)$, it takes $O(\#\text{k-assign})= O(k^{|N^-(u)|})$ time to enumerate the choices of dividing $k$ values into $|N^-(u)|$ buckets. Moreover,  for the enumeration of all possible answers $S$, it takes $O(\binom{n}{k})$ time. In the following, we optimize the $\#\text{k-assign}$ and $\#S$.  

\stitle{Reduce $\#\text{k-assign}$ by Knapsack Dynamic Programming}.
We propose to use Knapsack dynamic programming techniques \cite{pisinger1995algorithms} to tackle the exponential enumeration in division. We reformulate the enumeration problem in procedure $\mathcal{Y}(u, k, S)$ (line 11 of Algorithm~\ref{algo:dp}) and  $\mathcal{N}(u,k, S)$ (line 15 of Algorithm~\ref{algo:dp}) as the Knapsack problem.  Assume that a number $k$ represent the total capacity. Given a set of vertices $N^-(u)=\{x_1, ..., x_l\}$, for each vertex $x_i$ where $1\leq i\leq l$,  $\DP(x_i, k_{x_i}, S)$ represents an item having an item value of $\g(x_i, k_{x_i}, S)$ and an item volume of $k_{x_i}\leq k$. We assume that $F(i, k')$ is the state that the max value of the first $i$ items with a total of $k'$ capacity. The equation of state transformation is shown as follows.
$$F(i, k') = \max_{0\leq j \leq k'}(F(i - 1, k' - j) + \DP(x_i, j, S)).$$
For initialization, we set $F(i, 0) =0$ for $1\leq i\leq l$. Moreover, $F(l, k)= \max \{\sum_{x\in N^-(u)} \DP(x, k_x, S)\}$ with the constraint $\sum_{x\in N^-(u)} k_x= k$, is the largest summary score for a subtree rooted by $u$ with parameters $S$ and $k$. Hence, we can just enumerate each node $x_i$ in the set $N^-(u)$ for $1\leq i\leq l$ and $0\leq k'\leq k$ to find the maximum summary impact value. This method of dynamic programming can reduce the time complexity from $O(k\cdot k^{|N^-(u)|})$  to $O(|N^-(u)|k^2)$. 

\stitle{Reduce the number of states}. We reduce the number of all possible answers $S$ in $\DP(u,k, S)$ from $O(\binom{n}{k})$ to $O(h)$, where $h$ is the height of $T$. Given a summary set $S$ and a tree $T_u$ rooted by $u$, for each vertex $v\in T_u$, the score of $\smy(S, v)$ only depends on the nearest ancestor of $u$ in $S$, denoted as $\na_u(S)=\arg\min\{\dist \langle v, u \rangle: v\in \anc(u)\cap S\}$.  There exist at most $|\anc(u)|$ different ancestors. 
Instead of $\DP(u,k, S)$, we reformulate the state as $\DP(u, k, \na_u(S))$. This reduces $\#S$ from $O(\binom{n}{k})$ to $O(h)$. Thus, the total number of $\DP(u, k, \na_u(S))$ states is $O(nkh)$. 

\subsection{Correctness and Complexity}

In this section, we prove the correctness of Algorithm~\ref{algo:dp}, which shows that \DP always finds an exact optimal solution. Moreover, we analyze the time and space complexity of Algorithm~\ref{algo:dp}.

\stitle{Correctness analysis}. We use the induction idea to prove the correctness of \DP algorithm. Consider a subtree $T_u$ rooted by $u$, we first assume that the optimal solution of its children $x \in \nb^{-}(u)$ in the sub-problem is $\DP(x, k, S)$ and the optimal selection is denoted by $S_u^{*,k}$ in all the following lemmas. Based on these lemmas, we can derive the theorem of algorithm correctness.

\begin{lemma} 
\label{lemma.yes}
Give a subtree $T_{u}$ rooted by $u$, a summary set $S$, which satisfies $S \cap \des(u) = \emptyset$ and a number $k$, we choose additional $k - 1$ summary vertices $S_{u}^{k - 1} \subseteq \des(u) \setminus \{u\}$. 
The largest summary score is $\g_{T_u}(S \cup \{u\} \cup S_{u}^{*,k - 1}) = \mathcal{Y}(u, k, S)$.
\end{lemma}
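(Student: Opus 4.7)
The plan is to decompose the summary score $\g_{T_u}(S \cup \{u\} \cup S_u^{k-1})$ into independent contributions, one per child subtree of $u$, and then argue that maximizing over the $k-1$ extra selections reduces exactly to the optimization performed inside $\mathcal{Y}(u,k,S)$.

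First I would split the vertices of $T_u$ into the root $u$ and the disjoint subtrees $T_x$ for $x \in N^{-}(u)$. For the root $u$, since $u \in S \cup \{u\} \cup S_u^{k-1}$ and $u$ is its own ancestor with $\dis_u(u) = 1$, the contribution is exactly $\smy_{S \cup \{u\} \cup S_u^{k-1}}(u) = \rep_u(u) = \feq(u)$, which accounts for the leading $\feq(u)$ term in Eq.~\ref{eq.Yes}.

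Next I would observe that any descendant $y \in \des(x)$ for some $x \in N^{-}(u)$ has $\anc(y) \cap (S \cup \{u\} \cup S_u^{k-1}) \subseteq (S \cup \{u\}) \cup (S_u^{k-1} \cap \des(x))$, because the ancestors of $y$ that lie below $u$ all sit inside $T_x$. Consequently $\smy(\cdot)$ on the vertices of $T_x$ depends only on the fragment $S_x^{k_x} := S_u^{k-1} \cap \des(x)$ together with the fixed outside set $S \cup \{u\}$. So the total score splits as
\begin{equation*}
\feq(u) + \sum_{x \in N^{-}(u)} \g_{T_x}\!\bigl((S \cup \{u\}) \cup S_x^{k_x}\bigr),
\end{equation*}
where the sizes satisfy $\sum_{x \in N^{-}(u)} k_x = k-1$ and the fragments $S_x^{k_x}$ are disjoint across different $x$ because the subtrees $T_x$ are vertex-disjoint.

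Finally I would invoke the inductive hypothesis that $\DP(x, k_x, S \cup \{u\})$ is the optimum of $\g_{T_x}((S \cup \{u\}) \cup S_x^{k_x})$ over choices of $S_x^{k_x} \subseteq \des(x)$ of size $k_x$. Taking the maximum over the admissible assignments $(k_x)_{x \in N^{-}(u)}$ with $\sum k_x = k-1$ then gives precisely $\mathcal{Y}(u,k,S)$ as defined in Eq.~\ref{eq.Yes}, and this maximum equals $\g_{T_u}(S \cup \{u\} \cup S_u^{*,k-1})$ by the decomposition above. The main obstacle is the decomposition step: one must justify that no descendant of $u$ in one child's subtree can have its summary score affected by selections made inside a sibling subtree, which relies crucially on the tree structure of $T$ (so that $\anc(y)$ stays within a single root-to-$y$ path) and on $S \cap \des(u) = \emptyset$; once this independence is established, the rest follows by straightforward arithmetic and induction.
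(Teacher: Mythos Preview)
Your proposal is correct and follows essentially the same approach as the paper: both decompose $S_u^{*,k-1}$ across the disjoint child subtrees $T_x$, use the tree structure to argue that the score on $T_x$ depends only on $S\cup\{u\}$ and the fragment $S_x^{k_x}$, and then invoke the inductive meaning of $\DP(x,k_x,S\cup\{u\})$ together with the maximization over $(k_x)$ in Eq.~\ref{eq.Yes}. The paper phrases it as a two-sided inequality ($\geq$ from optimality of $S_u^{*,k-1}$, $\leq$ from the decomposition and inductive bound), while you establish the equality directly via the same decomposition; your extra care in justifying that $\smy$ on $T_x$ is unaffected by selections in sibling subtrees is a detail the paper leaves implicit.
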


\begin{proof}
Assume $S_{u}^{*,k-1}$ is the optimal selection of summary subproblem on a subtree $T_u$. First, $\g_{T_u}(S \cup \{u\} \cup S_{u}^{k-1^*}) \geq  \mathcal{Y}(u, k, S)$ due to the optimal answer $S_{u}^{*,k-1}$. Next, we decompose the vertex set $S_{u}^{*,k-1}$ into multiple subsets $\bigcup_{x \in \nb^{-}(u)}{S_{x}^{*,k_x}}$ for each children $x\in \nb^{-}(u)$. In this way, $\g_{T_u}(S \cup \{u\} \cup S_{u}^{*,k-1}) = \feq(u) + \sum_{x \in \nb^{-}(u)}{\g_{T_x}(S \cup \{u\} \cup S_{x}^{*,k_x})} \leq \feq(u) + \sum_{x \in \nb^{-}(u)}{\DP(x, k_x, S \cup \{u\})} \leq \feq(u)+ \max\{\sum_{x\in N^-(u)} \DP(x, k_x, S\cup\{u\})\} = \mathcal{Y}(u, k, S)$. Thus, $\g_{T_u}(S \cup \{u\} \cup S_{u}^{*,k - 1}) = \mathcal{Y}(u, k, S)$ holds.
\end{proof}

\begin{lemma} 
\label{lemma.no}
Give a subtree $T_{u}$ rooted by $u$, a summary set $S$, which satisfies $S \cap \des(u) = \emptyset$ and a number $k$, we choose additional $k$ summary vertices $S_{u}^{k} \subseteq \des(u) \setminus \{u\}$. 
The largest summary score is $\g_{T_u}(S \cup S_{u}^{k^*}) = \mathcal{N}(u, k, S)$.
\end{lemma}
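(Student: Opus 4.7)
The plan is to mirror the structure of Lemma~\ref{lemma.yes}, adapting the argument to the \emph{No-case} where $u$ is excluded from the answer. Since $u\notin S\cup S_u^k$ and $S\cap\des(u)=\emptyset$, the contribution of $u$ itself to $\g_{T_u}(S\cup S_u^k)$ is precisely $\smy_S(u)$: only ancestors of $u$ lying in $S$ (none of which are in $\des(u)$) can represent $u$, and no vertex of $S_u^k\subseteq \des(u)\setminus\{u\}$ can cover $u$. This is the key structural fact that makes the recurrence in Eq.~\ref{eq.No} separable across the subtrees rooted at children of $u$.

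First I would establish the inequality $\g_{T_u}(S\cup S_u^{*,k})\ge \mathcal{N}(u,k,S)$. Fix any assignment $\{k_x\}_{x\in\nb^-(u)}$ with $\sum_x k_x=k$ and, for each child $x$, take $S_x^{*,k_x}$ to be the optimum summary set of size $k_x$ in $T_x$ augmented by $S$. The union $\bigcup_{x\in\nb^-(u)} S_x^{*,k_x}$ is a feasible choice of $S_u^k$ of size $k$, so by optimality of $S_u^{*,k}$ we get $\g_{T_u}(S\cup S_u^{*,k})\ge \smy_S(u)+\sum_{x\in\nb^-(u)}\g_{T_x}(S\cup S_x^{*,k_x})=\smy_S(u)+\sum_{x\in\nb^-(u)}\DP(x,k_x,S)$. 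Taking the maximum over all valid assignments yields $\mathcal{N}(u,k,S)$.

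For the reverse direction, I would decompose $S_u^{*,k}$ along the children of $u$: set $S_x^{k_x}:=S_u^{*,k}\cap \des(x)$ and $k_x:=|S_x^{k_x}|$, so that $\sum_{x\in\nb^-(u)} k_x=k$ since the subtrees $T_x$ partition $\des(u)\setminus\{u\}$. The summary score then decomposes as $\g_{T_u}(S\cup S_u^{*,k})=\smy_S(u)+\sum_{x\in\nb^-(u)}\g_{T_x}(S\cup S_x^{k_x})$, where each term uses the fact that representatives for vertices in $T_x$ can only come from $S$ or from within $T_x$ itself (because summary sets in sibling subtrees are not ancestors). Applying the inductive hypothesis $\g_{T_x}(S\cup S_x^{k_x})\le \DP(x,k_x,S)$ on each child gives $\g_{T_u}(S\cup S_u^{*,k})\le \smy_S(u)+\sum_{x\in\nb^-(u)}\DP(x,k_x,S)\le \mathcal{N}(u,k,S)$.

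The main obstacle I anticipate is justifying cleanly that the summary score splits across the child subtrees, i.e., that no vertex in $S_x^{k_x}$ can serve as a representative for any vertex outside $T_x$, and conversely that representatives in $T_y$ for $y\neq x$ do not interfere with vertices in $T_x$. This follows from the tree structure, since the only ancestors of a vertex $v\in T_x$ that could lie in $S\cup S_u^{*,k}$ are either in $S$ (contributing $\smy_S$ terms) or in $T_x$ itself, justifying the clean recursion. Once this separability is formalized, both inequalities combine to give $\g_{T_u}(S\cup S_u^{*,k})=\mathcal{N}(u,k,S)$, completing the proof.
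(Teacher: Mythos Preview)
Your proposal is correct and follows essentially the same approach as the paper: both directions are proved by decomposing $S_u^{*,k}$ across the children subtrees, using $\smy_S(u)$ for the contribution of $u$, and invoking the inductive optimality of $\DP(x,k_x,S)$ on each child. Your version is more detailed in justifying the separability of the score across sibling subtrees, which the paper's proof simply asserts, but the underlying argument is identical.
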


\begin{proof}
Assume $S_{u}^{*,k}$ is the optimal selection of summary subproblem on a subtree $T_u$. First, $\g_{T_u}(S \cup S_{u}^{*,k}) \geq  \mathcal{N}(u, k, S)$ due to the optimal answer $S_{u}^{*,k}$. Next,   we decompose the vertex set $S_{u}^{*,k}$ into multiple subsets $\bigcup_{x \in \nb^{-}(u)}{S_{x}^{*,k_x}}$. for each children $x\in \nb^{-}(u)$. As a result, $\g_{T_u}(S \cup S_{u}^{*,k}) = \smy_S(u) + \sum_{x \in \nb^{-}(u)}{\g_{T_x}(S \cup S_{x}^{*,k_x})} \leq \smy_S(u) + \sum_{x \in \nb^{-}(u)}{\DP(x, k_x, S)} \leq \smy_S(u)+ \max\{\sum_{x\in N^-(u)} \DP(x, k_x, S)\} = \mathcal{N}(u, k, S)$. Overall, $\g_{T_u}(S \cup S_{u}^{*,k}) = \mathcal{N}(u, k, S)$ holds.
\end{proof}

In the following, we prove the initialization cases of $\g_{T_u}(S \cup S_{u}^{k}) = \DP(u, k, S)$ for leaf nodes $u$ with $k=0$ and $k=1$.

\begin{lemma} 
\label{lemma.leaf}
Give a subtree $T_{u}$ rooted by a leaf node $u$, a summary set $S$, which satisfies $S \cap \des(u) = \emptyset$ and a number $k \leq 1$, we choose additional $k$ summary vertices $S_{u}^{k} \subseteq \des(u)$. The largest summary score is $\g_{T_u}(S \cup S_{u}^{*,k}) = \DP(u, k, S)$.
\end{lemma}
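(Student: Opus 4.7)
The plan is to prove Lemma~\ref{lemma.leaf} by direct case analysis on $k\in\{0,1\}$, exploiting the simple structure of a leaf subtree. Since $u$ is a leaf, we have $\nb^{-}(u)=\emptyset$ and $\des(u)=\{u\}$, so $T_u$ consists of the single node $u$. Consequently $g_{T_u}(S')=\smy_{S'}(u)$ for any summary set $S'$, which reduces the lemma to verifying that the recursive definition in Algorithm~\ref{algo:dp} reproduces this single-node summary score.

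For the base case $k=0$, the only feasible choice is $S_u^{*,0}=\emptyset$, giving $g_{T_u}(S\cup S_u^{*,0})=\smy_S(u)$. I would then trace lines~5--7 of Algorithm~\ref{algo:dp}: because $\nb^{-}(u)$ is empty the \textbf{for}-loop contributes nothing, so the returned value is exactly $\smy_S(u)$, matching the optimum.

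For the case $k=1$, there is again only one feasible selection $S_u^{*,1}=\{u\}$ (since $\des(u)\setminus S = \{u\}$ by the assumption $S\cap\des(u)=\emptyset$), and once $u$ is selected it represents itself with $\rep_u(u)=\feq(u)$, so the true optimum is $\feq(u)$. I would then compute both sub-cases in line~3: in the Yes-procedure the constraint $\sum_{x\in\nb^{-}(u)} k_x=k-1=0$ is satisfied vacuously over the empty child set, hence $\mathcal{Y}(u,1,S)=\feq(u)+0=\feq(u)$; in the No-procedure the constraint $\sum_{x\in\nb^{-}(u)} k_x=1$ cannot be met with no children, so this branch is infeasible and can be treated as $-\infty$ (equivalently, $\mathcal{N}(u,1,S)$ is dominated and may be discarded). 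Taking the maximum yields $\DP(u,1,S)=\feq(u)=g_{T_u}(S\cup S_u^{*,1})$.

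The argument is almost entirely definitional, and I do not anticipate a genuine obstacle; the only subtle point is the convention for the infeasible No-branch when $k=1$ and $u$ is a leaf. I would address this by explicitly noting that $\mathcal{N}(u,1,S)$ enumerates over an empty feasible set and hence does not contribute, rather than leaving it implicit. With this convention fixed, the equality $g_{T_u}(S\cup S_u^{*,k})=\DP(u,k,S)$ for $k\le 1$ follows immediately, providing the base case needed for the induction that ultimately establishes the global correctness of \DP.
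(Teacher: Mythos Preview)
Your proposal is correct and follows essentially the same approach as the paper: a direct case split on $k\in\{0,1\}$ using that a leaf subtree has only the single node $u$. Your version is more explicit than the paper's (which simply asserts the equalities $\g_{T_u}(S\cup\{u\})=\feq(u)$ and $\g_{T_u}(S)=\smy_S(u)$ without tracing the algorithm), and in particular you spell out the infeasibility of the No-branch at $k=1$, which the paper leaves implicit.
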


\begin{proof}
We consider two cases of $k=0$ and $k=1$. For $k = 1$, $\g_{T_u}(S \cup S_{u}^{*,k}) = \g_{T_u}(S \cup \{u\}) = \feq(u) = \DP(u, k, S)$; For $k = 0$, $\g_{T_u}(S \cup S_{u}^{*,k}) = \g_{T_u}(S) = \smy_{S}(u) = \DP(u, k, S)$. Therefore, $\g_{T_u}(S \cup S_{u}^{*,k}) = \DP(u, k, S)$ holds for leaf node $u$ with $k=1$ and $k=0$.
\end{proof}

\begin{theorem} \label{theorem.correct}
Give a tree $T$ rooted by $r$ and a number $k$, we choose $k$ summary vertices $S_r^{k}$ by algorithm~\ref{algo:dp}. The largest summary score is $\g(S_r^{*,k}) = \DP(r, k, \emptyset)$.
\end{theorem}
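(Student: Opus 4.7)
The plan is to establish Theorem~\ref{theorem.correct} by structural induction on the tree, leveraging the three preceding lemmas as building blocks. The base case is immediate from Lemma~\ref{lemma.leaf}, which already handles leaf nodes with $k \in \{0, 1\}$. For a leaf $u$ with $k \geq 2$, the claim holds vacuously since there are not enough descendants; the algorithm naturally returns $\feq(u)$ via the Yes-case (or $-\infty$ / infeasibility, depending on the convention the authors implicitly use), so an extra remark handling $k > |\des(u)|$ will be needed.

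For the inductive step, I would fix a non-leaf vertex $u$, a number $k$, and a summary set $S$ with $S \cap \des(u) = \emptyset$, and assume as inductive hypothesis that $\g_{T_x}(S' \cup S_x^{*,k'}) = \DP(x, k', S')$ for every child $x \in \nb^{-}(u)$, every $k' \leq k$, and every admissible $S'$. The optimal selection $S_u^{*,k} \subseteq \des(u)$ either contains $u$ or does not, and these two cases partition the search space. By Lemma~\ref{lemma.yes}, the best score conditioned on $u \in S_u^{*,k}$ equals $\mathcal{Y}(u, k, S)$; by Lemma~\ref{lemma.no}, the best score conditioned on $u \notin S_u^{*,k}$ equals $\mathcal{N}(u, k, S)$. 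Taking the maximum of the two conditional optima, as is done in Eq.~\ref{eq.max} at line~3 of Algorithm~\ref{algo:dp}, therefore yields $\g_{T_u}(S \cup S_u^{*,k}) = \DP(u, k, S)$. Specializing to $u = r$, $S = \emptyset$, and the given $k$ produces the statement of the theorem.

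The main subtlety — and the one I expect to be the principal obstacle — lies in verifying that Lemmas~\ref{lemma.yes} and~\ref{lemma.no} actually use the inductive hypothesis correctly rather than merely assume it. In the proofs of those lemmas, the decomposition $S_u^{*,k} = \bigcup_{x \in \nb^{-}(u)} S_x^{*,k_x}$ and the inequality $\g_{T_x}(S \cup S_x^{*,k_x}) \leq \DP(x, k_x, S)$ silently invoke exactly the statement we are inductively proving. The clean way to close this loop is to prove a single strengthened induction hypothesis simultaneously over all triples $(u, k, S)$ with $u$ ranging over vertices in post-order and $S \cap \des(u) = \emptyset$, stating that $\DP(u, k, S)$ equals the maximum of $\g_{T_u}(S \cup S')$ over all $S' \subseteq \des(u)$ with $|S'| = k$. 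Under this framing, Lemmas~\ref{lemma.yes}--\ref{lemma.leaf} become instances of the inductive step rather than independent prerequisites, and the proof amounts to checking that the two procedures in Algorithm~\ref{algo:dp} jointly enumerate every feasible $S'$ (via the Knapsack-style distribution of budget $k-1$ or $k$ across children). Once this enumeration completeness is explicit, combining it with the optimality established in the two lemmas closes the induction, and the theorem follows by setting $(u, k, S) = (r, k, \emptyset)$.
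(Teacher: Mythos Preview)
Your proposal is correct and follows essentially the same route as the paper: structural induction with Lemma~\ref{lemma.leaf} as the base case and Lemmas~\ref{lemma.yes} and~\ref{lemma.no} supplying the two branches of the inductive step via Eq.~\ref{eq.max}, then specializing to $(r,k,\emptyset)$. Your explicit observation that Lemmas~\ref{lemma.yes} and~\ref{lemma.no} already presuppose the inductive hypothesis on the children is exactly how the paper frames things (it states this assumption just before the lemmas), so your ``strengthened simultaneous induction'' is simply a cleaner articulation of the same argument rather than a different one.
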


\begin{proof}
Based on Eq.~\ref{eq.max}, Lemma~\ref{lemma.yes} and Lemma~\ref{lemma.no}, we obtain $\DP(u, k, S) = \max\{\mathcal{Y}(u,k, S), \mathcal{N}(u,k,S)\} = \g_{T_u}(S \cup S_{u}^{*,k}) $. In addition, based on the induction idea and exact initialization cases in Lemma~\ref{lemma.leaf}, our answer of $\DP(r, k, \emptyset) = \g(\emptyset \cup S_r^{*,k}) = \g(S_r^{*,k})$ is the optimal solution. 
\end{proof}

\stitle{Complexity analysis}. The total number of states is $O(nhk)$ where $h$ is the height of $T$ and the transfer equation takes $|N^-(u)| \cdot k^2$. So, the time complexity of \DP in Algorithm~\ref{algo:dp} is $O(\sum_{u \in V} |N^-(u)| \cdot k^3 \cdot h)$ $\subseteq O(nhk^3)$. 
If $T$ is a complete binary tree with a height of $h \in O(\log n)$, it takes O($n \log n k^3$) time. Moreover, each state takes $O(k)$ memory to store the selected set, so the space complexity is $O(nhk^2)$.

\section{Tree Reduction for Fast Summarization}\label{sec.vtree}

In this section, we propose a tree reduction method \vtree to accelerate summarization process and achieve optimal answers. \vtree removes the useless nodes from tree $T$ and generate a small tree $T^*$ with $|T^*|\leq |T|$. We also analyze the correctness and complexity of \vtree. 

\stitle{Overview.} We observe that there exist several vertices could not be answer candidates.  We can remove useless vertices from tree $T$ based on the vertices with positive weights in the important node set $\mathcal{I}$. Specifically, those useless vertices satisfy two conditions  at the same time. First, each useless vertex $u$ has zero weights, i.e., $\feq(u)=0$. Second, each useless vertex $u$ is not a lowest common ancestor for any vertex subset $\mathcal{I}' \subseteq \mathcal{I}$. In this way, we can remove all these useless vertices from $T$ and generate a new tree $T^*(V^*, E^*)$, which significantly reduces the size of tree $T (V, E)$.  In many real applications, a large number of vertices have zero-weights in tree datasets as shown in Section~\ref{sec.exp}. 

To understand the correctness of our removal strategy, we show that Algorithm~\ref{algo:dp} achieves the same answers $S^*$ on original tree $T$ and reduced tree $T^*$. In other words, the useless vertices that are removed based on the above two conditions, do not appear in the answer $S^*$. Let us consider a deleted vertex $v\in V$ and $v \notin V^*$.   

First, there exists another vertex $u \in V^*$ no worse than $v$ for the summary answers. Alternatively, the representative impact score of $u$ is no less than the representative impact score of $v$ with regard to any vertex $x\in des(v)$, i.e., $\rep_{u}(x) \geq \rep_{v}(x)$. 
The rational reasons are as follows. \TKDE{Assume that an non-empty set $\mathcal{I}^\prime = \mathcal{I} \cap \des(v)$ and $u = \LCA(\mathcal{I}^\prime) \in V^*$. Based on $u$ is the least common ancestor, $v$ is an ancestor of $u$, so $\rep_{u}(x) \geq \rep_{v}(x)$ for each $x \in \mathcal{I}^\prime$. Hence, in whatever cases, $u$ is a better choice than $v$ in the subtree $T_v$. Next, we analyze the correctness of \DP algorithm in the new tree $T^*$. For each vertex $v \notin V^*$, it can be replaced by a better vertex $u \in V^*$. Thus, it wouldn't be selected as an answer in sub-problems by \DP. Thus, \DP finds the optimal answers in $T^*$ as in $T$.}

However, identifying all Lowest Common Ancestors (\LCA) \cite{bender2000lca} of any vertex subset $\mathcal{I}' \subseteq \mathcal{I}$, is very time consuming. Given three vertices $x, y, z$, we denote $\LCA(x, y, z) $ and $ \LCA(x, y)$ by the LCAs of vertices $\{x, y, z\}$ and $\{x, y\}$ respectively. We use the preorder traversal (Euler tour \cite{bender2000lca}) of a tree to optimize the \vtree algorithm. Similar as preorder traversal in a binary tree, preorder traversal in general tree traverses root firstly and then traverses the children from left to right. In the preorder traversal, we call $u$ is before $v$ if a vertex $u$ is traversed before $v$, denoted by $u < v$. We also define a sequenced list of vertices $P = \{v_1, v_2, ..., v_{|\mathcal{I}|} | v_i\in\mathcal{I}\}$ sorted by the preorder traversal. Based on the preorder ranking of vertices, we introduce the following lemma. 

\begin{lemma}
\label{lemma.lca}
For $\forall x, y, z \in V$, if $x < y < z$, we have one and only one of two following cases: either $\LCA(x, y, z) = \LCA(x, z) = \LCA(x, y)$ or  $\LCA(x, y, z) = \LCA(x, z) = \LCA(y, z)$ holds.
\end{lemma}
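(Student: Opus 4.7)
The plan is to exploit the ``contiguous interval'' property of preorder traversal: for every vertex $w$, the descendants of $w$ (including $w$ itself) appear as a contiguous block in the preorder, starting with $w$. Setting $w = \LCA(x,z)$, both $x$ and $z$ lie in this block, so any $y$ with $x < y < z$ also lies in it, i.e., $y \in \des(w)$. Hence $w$ is a common ancestor of $\{x,y,z\}$, so $\LCA(x,y,z)$ is a descendant of $w$; but $\LCA(x,y,z)$ is also an ancestor of $x$ and $z$, hence an ancestor of their LCA $w$. These two containments force $\LCA(x,y,z) = w = \LCA(x,z)$, which is the first equality appearing in both cases of the lemma.

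To finish, I would perform a short case analysis on where $y$ sits among the children of $w$. Note first that $z \neq w$: otherwise $z$ would be an ancestor of $x$ and would therefore precede $x$ in preorder, contradicting $x < z$. If $x = w$ or $y = w$, then $\LCA(x,y) = w$ holds directly and the first case of the lemma is satisfied. Otherwise, let $c_x$ and $c_z$ be the (necessarily distinct) children of $w$ whose subtrees contain $x$ and $z$, and let $c_y$ be the child of $w$ whose subtree contains $y$. If $c_y = c_x$, then $y$ and $z$ lie in different subtrees of $w$ so $\LCA(y,z) = w$, giving the second case. If $c_y = c_z$, symmetrically $\LCA(x,y) = w$ gives the first case. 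If $c_y$ is a third child, both $\LCA(x,y) = w$ and $\LCA(y,z) = w$, so both cases hold simultaneously.

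The only real subtlety is justifying the contiguous-interval claim, which unwinds directly from the recursive definition of preorder: visiting $w$ and then recursing into each of its subtrees left-to-right emits a segment of length $|\des(w)|$ beginning at $w$. I would also briefly remark that the ``one and only one'' phrasing in the lemma is best read inclusively, since the third-child subcase above shows that the two listed equalities can genuinely hold at once; the mathematical content used downstream by \vtree is only that at least one of them does.
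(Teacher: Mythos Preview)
Your proof is correct. The paper itself does not actually prove this lemma: its entire proof reads ``The proof can be similarly done as~\cite{bender2000lca},'' deferring wholesale to the cited reference on LCA and Euler tours. Your argument via the contiguous-interval property of preorder traversal is the standard route and is complete as written, so there is nothing substantive to compare against.

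Two minor remarks. First, the case $y = w$ you list is in fact vacuous: $w$ is the first element of the preorder block $\des(w)$, so $w \le x < y$ already rules out $y = w$. This is harmless, just dead code. Second, your observation that the ``one and only one'' phrasing is imprecise is well-taken---the third-child subcase genuinely satisfies both alternatives---and you are right that only the ``at least one'' direction is used downstream by \vtree.
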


\begin{proof}
The proof can be similarly done as \cite{bender2000lca}.
\end{proof}

Based on the Lemma~\ref{lemma.lca}, for each subset $\mathcal{I}' \subseteq \mathcal{I}$, the $\LCA(\mathcal{I}')$ equals the \LCA of the first vertex $v_1'$ and last vertex $v_{|\mathcal{I}'|}'$ in the ordered set $P_{\mathcal{I}'}$, i.e. $\LCA(\mathcal{I}') = \LCA(v_1', v_{|\mathcal{I}'|}')$. It also equals the $\LCA$ of two neighbor vertices in ordered set $P_{\mathcal{I}}$, i.e. $\LCA(v_1', v_{|\mathcal{I}'|}') \in \{\LCA(v_1, v_2), ... , \LCA(v_{|\mathcal{I}| - 1}, v_{|\mathcal{I}|})\}$. So, instead of finding all the \LCAs of each subset of $I$, we can only identify the \LCAs of two neighbor vertices in the ordered set $P$. It reduces the number of LCA calculations from $O({|\mathcal{I}|}^2)$ to $O(|\mathcal{I}|)$. Thanks to it, we get the upper bound of the size of $T^*$ as follows.

\begin{theorem}
\label{theorem.treesize}
The tree size of $T^*$ is $|V^*| \leq 2|\mathcal{I}| + 1$.
\end{theorem}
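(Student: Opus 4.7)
The plan is to bound $|V^*|$ by splitting it into two disjoint classes: (i) the vertices of positive weight, and (ii) the vertices of zero weight that are nevertheless retained because they serve as the LCA of some subset $\mathcal{I}' \subseteq \mathcal{I}$. Class (i) contributes at most $|\mathcal{I}|$ vertices by definition of $\mathcal{I}$, so the whole argument reduces to bounding class (ii).

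For class (ii), the crux is to prove that every vertex of the form $\LCA(\mathcal{I}')$ for some $\mathcal{I}' \subseteq \mathcal{I}$ already appears as $\LCA(v_j, v_{j+1})$ for some consecutive pair in the preorder sequence $P = (v_1, \dots, v_{|\mathcal{I}|})$. I would establish this in two steps, both by induction leveraging Lemma~\ref{lemma.lca}. First, I would reduce from arbitrary subsets to pairs: for $\mathcal{I}' = \{v_{i_1}, \dots, v_{i_m}\}$ with $i_1 < \dots < i_m$, apply the lemma to the triple $(v_{i_1}, v_{i_2}, v_{i_m})$; one of the two cases collapses $\LCA(\mathcal{I}')$ to $\LCA$ of a strictly smaller subset, so induction on $m$ shows $\LCA(\mathcal{I}') = \LCA(v_{i_1}, v_{i_m})$. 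Second, I would reduce from an arbitrary pair $(v_a, v_b)$ with $a < b$ to a consecutive pair: iteratively apply the lemma to $(v_a, v_{a+1}, v_b)$ and again to $(v_{b-1}, v_b, \dots)$ until $\LCA(v_a, v_b)$ is identified with some $\LCA(v_j, v_{j+1})$ with $a \le j < b$.

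Once this structural claim is in hand, the set $L = \{\LCA(v_j, v_{j+1}) : 1 \le j \le |\mathcal{I}|-1\}$ contains all type-(ii) vertices, so $|L| \le |\mathcal{I}| - 1$. Combining with the $|\mathcal{I}|$ vertices of class (i), and allowing one extra slot for the root of $T^*$ if it happens to be retained as a structural root without belonging to either class, yields
\[
|V^*| \;\le\; |\mathcal{I}| + (|\mathcal{I}|-1) + 1 \;\le\; 2|\mathcal{I}| + 1,
\]
which is the claimed bound.

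The main obstacle I anticipate is the double induction in the second paragraph: Lemma~\ref{lemma.lca} only concerns triples, so one must iterate it carefully to (a) shrink arbitrary subsets to pairs and (b) shrink arbitrary pairs to consecutive pairs, while being certain that no auxiliary LCA vertex slips in that lies outside the set $L$. A sloppy execution could either double-count class (ii) vertices or miss LCAs that live higher up in the tree, so tracking preorder indices along both inductions is the delicate part.
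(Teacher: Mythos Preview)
Your proposal is correct and follows essentially the same approach as the paper: decompose $V^*$ into $\mathcal{I}\cup\{r\}$ together with the consecutive-pair LCAs $\{\LCA(v_j,v_{j+1})\}$, then count. The only difference is packaging: the paper proves the reduction ``every $\LCA(\mathcal{I}')$ is some $\LCA(v_j,v_{j+1})$'' in the discussion \emph{preceding} the theorem (via Lemma~\ref{lemma.lca}), so its actual proof of Theorem~\ref{theorem.treesize} is a one-line count $|V^*|\le |\mathcal{I}\cup\{r\}|+|\mathcal{I}|=2|\mathcal{I}|+1$, whereas you fold both the structural reduction and the count into a single argument.
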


\begin{proof}
Based on the optimization, the number of additional vertices is not greater than $|\mathcal{I}|$. So, $|V^*| \leq |\mathcal{I} \cup \{r\}| + |\mathcal{I}| = 2|\mathcal{I}| + 1$.
\end{proof}

\begin{algorithm}[t]
  \caption{Vtree}
  \label{algo:vtree}
  \begin{algorithmic}[1]
    \Require A tree $T = (V, E, \feq)$, an important node set $\mathcal{I}$.
    \Ensure A reduced tree $T^* = (V^*, E^*)$.
    \State $V^* \gets \mathcal{I} \cup \{r\}$;
    \State Apply the preorder traversal on tree $T$ \cite{bender2000lca};
    \State Obtain a sequenced list of  vertices $P = \{v_i\in\mathcal{I}\} $ where $v_i$ is visited earlier than $v_j$ in  preorder traversal for $i\leq j$;
    \For{$i \leftarrow 1$ to $|\mathcal{I}| - 1$}
      \State $V^* \gets V^* \cup \{LCA(v_i, v_{i+1})\}$ ;
    \EndFor
    \State $T^* = (V^*, connect(V^*, T))$;
    \State \Return{$T^*$}
  \end{algorithmic}
\end{algorithm}

\TKDE{
\stitle{Vtree Algorithm.} Algorithm~\ref{algo:vtree} shows the pseudo code of tree reduction method. First, we collect all vertices with positive weights and the root $r$ into a set $V^* = \mathcal{I} \cup \{r\}$ (line 1). 
Then, we apply preorder traversal on the tree $T$ by depth-first search (DFS) and sort vertices in the set $\mathcal{I}$(lines 2-3). 
Next, we construct a tree $T^*$ consisted of vertices in $V^*$. For a vertex $u$ with $\feq(u)=0$, $u$ can be added into $T^*$, only when $u$ is the \LCA of two neighbor vertices in preorder $\mathcal{I}$ (lines 4-5). 
Finally, we add edges between the nodes in $V^*$. We assign an edge weight between $u$ and $v$ as $|\l(u) - \l(v)|$ in $T$ (line 6). 
}

\begin{figure}[t]
\centering
{
\subfigure[Tree $T$]{
\label{example.vtree.1}
\includegraphics[width=0.30\linewidth]{./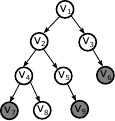} }
\quad
\subfigure[$v_1,v_2$ are LCA of $\mathcal{I}$]{
\label{example.vtree.2}
\includegraphics[width=0.30\linewidth]{./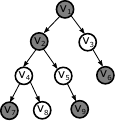} }
\quad
\subfigure[A new tree $T^*$]{
\label{example.vtree.3}
\includegraphics[width=0.23\linewidth]{./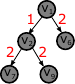} }
}
\caption{An example of tree reduction on $T$ with $\mathcal{I}=\{v_6, v_7, v_9\}$. 
A reduced tree $T^*$ of $T$ is shown in Figure~\ref{example.vtree.3}.}
\label{fig.vtree.example}
\vspace{-0.2cm}
\end{figure}

\begin{example}
Consider the tree $T$ shown in Figure~\ref{example.vtree.1}. The gray nodes have non-zero weights and belong to the important node set $\mathcal{I}=\{v_7, v_9, v_6\}$. The white nodes have zero weights, e.g., $v_1$, $v_2$, and so on. We apply the \vtree algorithm to reduce this tree $T$. The important steps are shown in Figure~\ref{fig.vtree.example}. The algorithm gets all the \LCA of the nodes following preorder traversal $v_7, v_9,$ and $v_6$. Thus, we consider two pairs of nodes, i.e., $(v_7, v_9)$ and $(v_9, v_6)$. First, it identifies the LCA of $v_7$ and $v_9$ as $v_2$, i.e., $\LCA(v_7, v_9) = v_2$, and then identifies $\LCA(v_9, v_6) = v_1$, which are colored in gray in Figure~\ref{example.vtree.2}. Next, \vtree removes from tree $T$ all nodes  that have zero-weights and are not identified as the qualified LCAs. It  adds edges between the remaining nodes in the new tree $T^*$ and assigns the corresponding edge weights in Figure~\ref{example.vtree.3}. The weight between $v_2$ and $v_7$ is $w(v_2, v_7)= 2$, due to that the level difference between $v_2$ and $v_7$ is 2 in the original tree $T$ in  Figure~\ref{example.vtree.1}. 
\end{example}

\stitle{Complexity analysis.}
Based on \cite{bender2000lca}, each \LCA  calculation  takes $O(\log{h})$ time and $O(n \log{h})$  space. Thus, the \vtree in Algorithm~\ref{algo:vtree}  takes $O(|\mathcal{I}|\log{h})$ time and $O(n \log{h})$ space. Based on Theorem~\ref{theorem.treesize}, the size of new tree $T^*$ is $|V^*| \leq 2 |\mathcal{I}| + 1$.  
As a result, \DP takes $O(|\mathcal{I}| hk^3 + |\mathcal{I}| \log{h})$ $\subseteq O(|\mathcal{I}| hk^3)$ time and $O(|\mathcal{I}| hk^2 + n\log{h})$ space. Note that \DP applied on the reduced tree $T^*$ achieves the same optimal solution as the original tree $T$, but runs much faster due to a smaller tree with $|\mathcal{I}| \ll n$ in practice. 
\section{Experiments}\label{sec.exp}

In this section, we conduct extensive experiments to evaluate the performance of our proposed methods. All algorithms are implemented in C++. \TKDE{The source codes are publicly available.\footnote{\url{https://github.com/csxlzhu/TKDE_OTS}} 
}

\begin{table}[t]
\centering
\caption{The statistics of tree datasets.}\label{table:data}
\vspace{-0.3cm}
\scalebox{1.0}{
\begin{tabular}{c|c|c|c}
\toprule
Name& $n$& $|\mathcal{I}|$& Height\\
\midrule
\LATT& 4,226& 960& 22\\
\LNUR& 4,226& 771& 22\\
\ANIM& 15,135& 4,350& 11\\
\IMAGE& 73,298& 5,000& 20\\
\YAGO& 493,839& 10,000& 17\\
\bottomrule
\end{tabular}
}
\vspace{-0.5cm}
\end{table}

\stitle{Datasets. } We use five real-world datasets of hierarchical tree $T$ containing weighted terminologies. First, \LATT and \LNUR are extracted from the Medical Entity Dictionary (MED) \cite{jing2011graphical}. The tree contains 4,226 nodes. Each node represents a MED term. In addition, we use two datasets of $\mathcal{I}$, where the dataset \LATT contains the information about how physicians query online knowledge resources, and the other dataset \LNUR contains the query information of nurses. These two datasets contain 960 records and 771 records, respectively. Each record consists of a MED term with a frequency count of its occurrence, representing its node weight. 
The third dataset, \ANIM, is extracted from the ``Anime'' catalog in Wikipedia \cite{ANIM}. The \ANIM tree contains 
15,135 vertices. Each vertex represents the animation websites or superior categories.
The weight of a vertex is the number of page-views on this vertex catalog within one month. 
The fourth dataset, \IMAGE, is extracted from the Image-net \cite{imagenet}. The \IMAGE tree contains 
73,298 vertices.
Each \emph{synset} tag represents a vertex, whose id is given in the \emph{wnid} attribute of the tag. 
We choose 5,000 random catalogs containing images as the set $\mathcal{I}$. The frequency of each catalog is the number of images in the catalog. Last, the fifth dataset, \YAGO, is extracted from the ontology structure yagoTaxonomy in multilingual Wikipedias~\cite{yago, mahdisoltani2013yago3}. The \YAGO tree contains 493,839 taxonomy vertices. An edge represents that the child vertex is a ``subClassOf'' the parent vertex. The frequency of each taxonomy vertex is the number of objects from yagoTypes belong to this taxonomy.

\begin{figure*}[t]
\centering
{
\subfigure[\LATT]{
\label{fig.exp1_1_1}
\includegraphics[width=0.205\linewidth]{./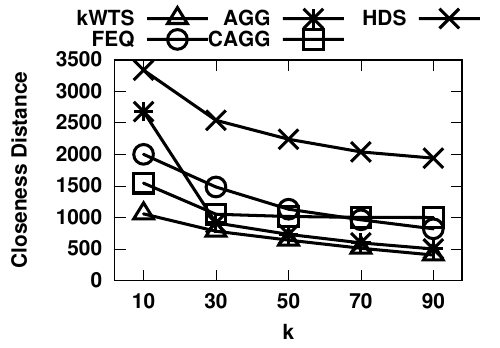} }\hskip -0.15in
\subfigure[\LNUR]{
\label{fig.exp1_1_2}
\includegraphics[width=0.205\linewidth]{./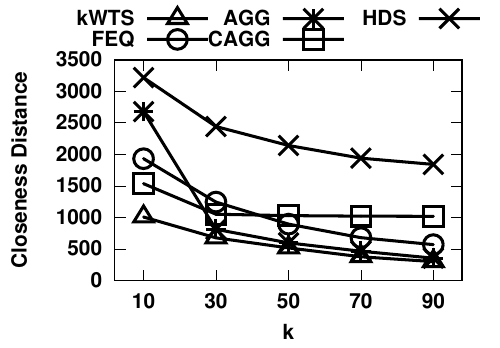} }\hskip -0.15in
\subfigure[\ANIM]{
\label{fig.exp1_1_3}
\includegraphics[width=0.205\linewidth]{./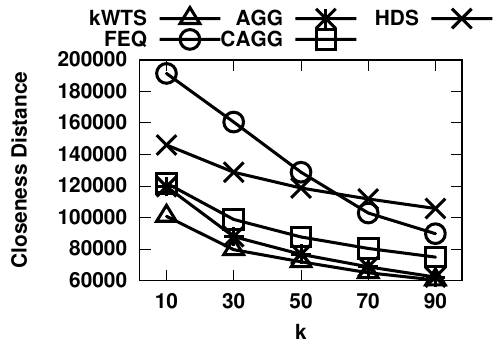} }\hskip -0.15in
\subfigure[\IMAGE]{
\label{fig.exp1_1_4}
\includegraphics[width=0.205\linewidth]{./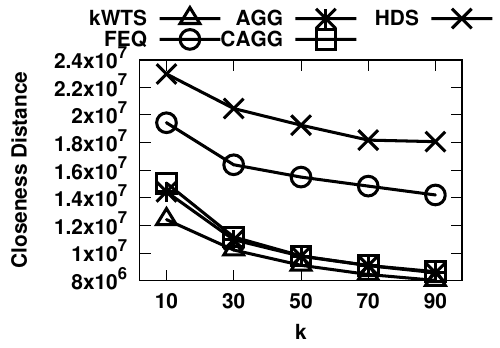} }\hskip -0.15in
\subfigure[\YAGO]{
\label{fig.exp1_1_5}
\includegraphics[width=0.205\linewidth]{./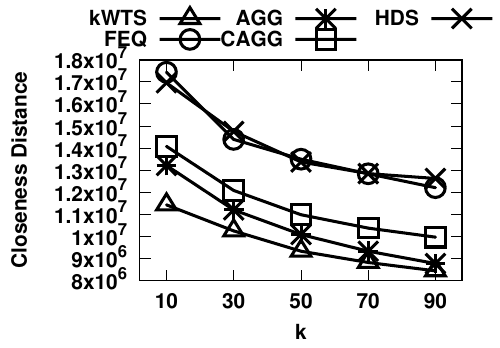} }
}
\vspace{-0.3cm}
\caption{Closeness distance of different models on all datasets.}
\label{fig.exp1_1}
\vspace{-0.4cm}
\end{figure*}

\begin{figure*}[t]
\centering
{
\subfigure[\LATT]{
\label{fig.exp1_2_1}
\includegraphics[width=0.205\linewidth]{./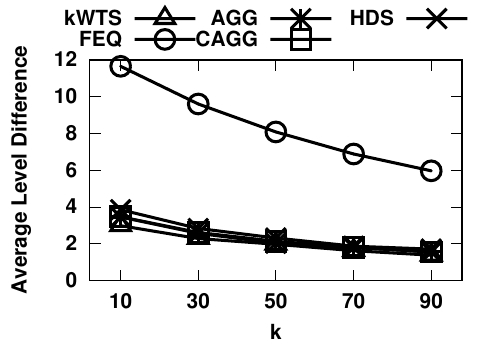} }\hskip -0.15in
\subfigure[\LNUR]{
\label{fig.exp1_2_2}
\includegraphics[width=0.205\linewidth]{./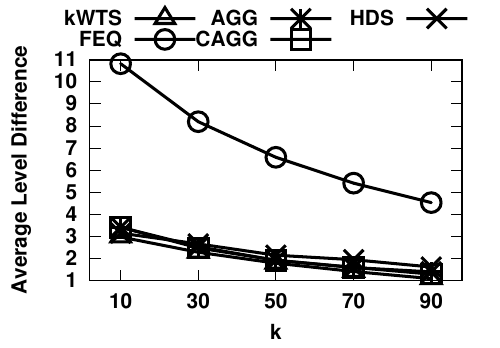} }\hskip -0.15in
\subfigure[\ANIM]{
\label{fig.exp1_2_3}
\includegraphics[width=0.205\linewidth]{./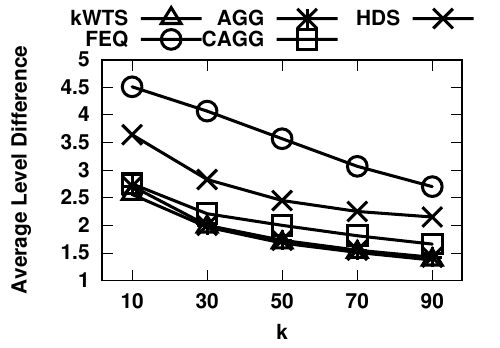} }\hskip -0.15in
\subfigure[\IMAGE]{
\label{fig.exp1_2_4}
\includegraphics[width=0.205\linewidth]{./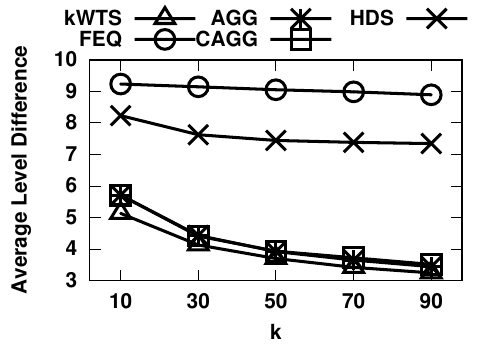} }\hskip -0.15in
\subfigure[\YAGO]{
\label{fig.exp1_2_4}
\includegraphics[width=0.205\linewidth]{./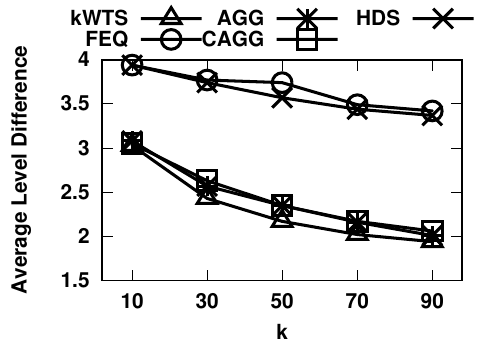} }
}
\vspace{-0.3cm}
\caption{Average level difference of different models on all datasets.}
\label{fig.exp1_2}
\vspace{-0.4cm}
\end{figure*}

\begin{figure*}[t]
\centering
{
\subfigure[\LATT]{
\label{fig.exp1_3_1}
\includegraphics[width=0.205\linewidth]{./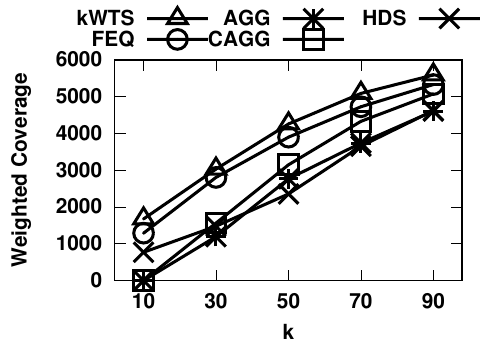} }\hskip -0.15in
\subfigure[\LNUR]{
\label{fig.exp1_3_2}
\includegraphics[width=0.205\linewidth]{./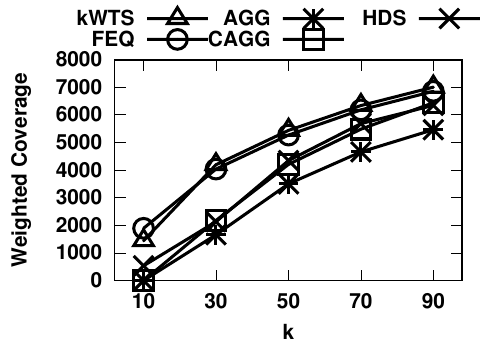} }\hskip -0.15in
\subfigure[\ANIM]{
\label{fig.exp1_3_3}
\includegraphics[width=0.205\linewidth]{./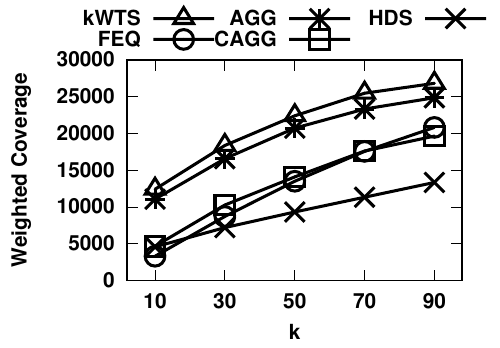} }\hskip -0.15in
\subfigure[\IMAGE]{
\label{fig.exp1_3_4}
\includegraphics[width=0.205\linewidth]{./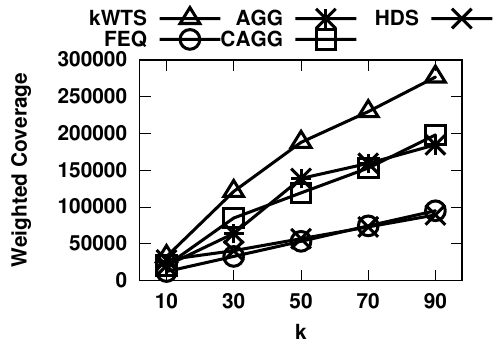} }\hskip -0.15in
\subfigure[\YAGO]{
\label{fig.exp1_3_5}
\includegraphics[width=0.205\linewidth]{./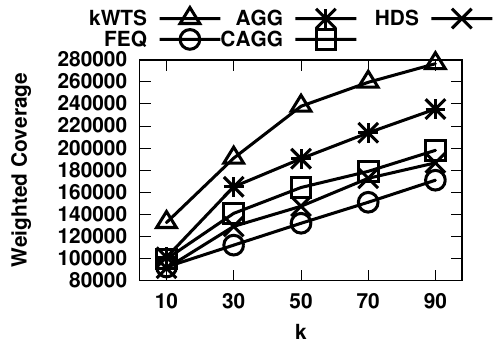} }
}
\vspace{-0.3cm}
\caption{Weighted coverage of different models on all datasets.}
\label{fig.exp1_3}
\end{figure*}

\stitle{Compared methods. }
To evaluate the effectiveness of our modeling problem \kVDOmodel, we evaluate and compare four competitive approaches 
-- \textbf{\FEQ}~\cite{jing2011graphical}, \textbf{\AGG}~\cite{jing2011graphical}, \textbf{\CAGG}~\cite{jing2011graphical}, and \textbf{\TS}~\cite{kimsummarizing}. 

\squishlisttight
\item  \textbf{\FEQ}: is a baseline approach, which selects $k$ nodes with the highest frequencies~\cite{jing2011graphical}.  

\item \textbf{\AGG}:  picks a set of $k$ nodes with the highest aggregate frequencies, where the aggregate frequency of a node $x$ is defined as $AF(v)=\sum_{y\in \dec(x)} \feq(y)$. 

\item  \textbf{\CAGG}: is a variant method of \textbf{\AGG} using another metric of contribution ratio. For a node $x$, the contribution ratio of $x$ is defined by $R(v)=\frac{AF(x)}{AF(y)}$ where $y$ is the parent of $x$. 
 Given a ratio threshold $\theta$, \CAGG selects the $k$ nodes that have the highest aggregate frequencies and the contribution ratio no less than $\theta$. We set $\theta=0.4$ by following \cite{jing2011graphical}. 
\item  \textbf{\TS}: is the state-of-the-art method of \underline{H}ierarchical \underline{D}ata \underline{S}ummaries, which creates a concise summary of similar weights in hierarchical multidimensional data~\cite{kimsummarizing}. To make a comparison, we use the 1-dimensional hierarchical data and a variant method to select exact $k$ summary vertices as the answer. 
\end{list}

Furthermore, we evaluate the effectiveness and efficiency of our algorithms \greedy (Algorithm~\ref{algo:greedy}) and \DP (Algorithm~\ref{algo:dp}), which respectively solve \kVDO approximately and optimally. We compare them with \Baseline greedy method and \brute. \Baseline greedy method gets the same solution as \greedy. Furthermore, \Baseline takes $O(n^2)$ time to compute $\triangle_{g}(x|S)$. Thus, the total time complexity of \Baseline is $O(n^3k)$. On the other hand, \brute achieves the same answer as \DP, which takes the exponential time w.r.t. the size of tree $|T|$ and  $k$.

\stitle{Evaluation metrics.}
To evaluate the quality of summary result $S$ found by all models, we use three metrics, i.e., the closeness distance $\CD(\mathcal{I}, S)$~\cite{huang2017ontology}, the average level difference $\ALD(\mathcal{I}, S)$~\cite{zhu2020top}, and also the weighted coverage $\WC(\mathcal{I}, S)$~\cite{zhu2020top, ranu2014answering}. 

\squishlisttight
\item[1.] The closeness distance $\CD(\mathcal{I}, S)$ is defined as the sum of weighted distance between $S$ to $\mathcal{I}$, denoted by 
$$\CD(\mathcal{I}, S) = \sum_{y\in \mathcal{I}} \min_{x\in S} \dist_{T}(x, y) \cdot \feq(y),$$
where $\dist_{T}(x, y)$ is 
the distance between $x$ and $y$ in  $T$. The smaller is $\CD(\mathcal{I}, S)$, the better is the summary quality.

\item[2.] The average level difference  $\ALD(\mathcal{I}, S)$ is a distance-based metric proposed in~\cite{zhu2020top}. $\ALD(\mathcal{I}, S)$ is defined as the average level difference between summary vertex and weighted vertex, denoted by 
$$\ALD(\mathcal{I}, S) =\frac{\sum_{y \in \mathcal{I}} \min_{x\in S \cap \anc(y)} (\l(y) - \l(x)) \cdot \feq(y)}{\sum_{y \in \mathcal{I}} \feq(y)}.$$
Note that we consider $\min_{x\in \emptyset} (\l(y) - \l(x)) = \l(y)$. 
The $\ALD(\mathcal{I}, S)$ metric takes into account both the level difference of summary results and the vertex weight. 
The smaller is $\ALD(\mathcal{I}, S)$, the better is the summary quality.

\item[3.] The weighted coverage $\WC(\mathcal{I}, S)$  is defined as the total weight of the vertices within summary set $S$ or their children, denoted by 
$$\WC(\mathcal{I}, S)=\sum_{x \in \mathcal{I} \cap C(S)}{\feq(x)},$$
where $C(S)= S \cup \bigcup_{x \in S} \nb^{-}(x)$. 
The $\WC(\mathcal{I}, S)$ metric evaluates the coverage of important vertices with large weights. The larger is $\WC(\mathcal{I}, S)$, the better is the result. 
\end{list}

Generally, both $\CD(\mathcal{I}, S)$ and $\ALD(\mathcal{I}, S)$ measure the distance between selection answer $S$ and important vertices $\mathcal{I}$. The smaller is the value, the better is the summary quality. Overall, these three evaluation metrics quantify the desiderata metrics of a good summarization in Section~\ref{sec.pre}. In addition, to evaluate the effectiveness of algorithms, we also use summary score $\g(S)$ to compare the results. The larger is $\g(S)$, the better is the solution. Furthermore, to evaluate the efficiency, we report the running time of different summarization algorithms. 
Note that we treat the running time as infinite if the algorithm run exceeds 3 hours.

\subsection{Effectiveness Evaluation}

\stitle{Exp-1: Quality comparison of different summarization models.} We compare the summarization quality of five different models \kVDOmodel, \FEQ, \AGG, \CAGG, and \TS. 
Figures \ref{fig.exp1_1}, \ref{fig.exp1_2} and \ref{fig.exp1_3} show the results of competitive methods on all real-world datasets, in terms of the closeness distance, the average level difference, and the weighted coverage, respectively. Note that we use \DP algorithm for \kVDOmodel model, which achieves the optimal solution. 
The size of summary set $k$ varies from $10$ to $90$. All models achieve smaller closeness distance, average level difference and larger weighted coverage with the increased $k$. 
On the other hand, our method \kVDOmodel can target these key vertices to obtain small closeness distances even when a small value  $k=10$, reflecting a superiority of \kVDOmodel against \AGG.  Figures~\ref{fig.exp1_3}(d) and~\ref{fig.exp1_3}(e) show that \kVDOmodel has much more substantial advantages than other methods, as the synthetic weighted nodes are more likely to locate at the bottom of tree in  IMAGE and YAGO. Furthermore, our model \kVDOmodel is a clear winner of all competitors, consistently achieving the smallest closeness distance, the smallest average level difference, and also the largest weighted coverage in Figures \ref{fig.exp1_1}, \ref{fig.exp1_2} and \ref{fig.exp1_3}. It significantly outperforms the other methods for a smaller $k$, which is a great help to shrink large datasets for tree summarization. 

\stitle{Remark.} Note that the closeness distance of \AGG has a sudden drop on \LATT and \LNUR from $k=10$ to $30$ as shown in Figures~\ref{fig.exp1_1}(a) and~\ref{fig.exp1_1}(b). This is caused by that the key vertices for tree summarization are selected by \AGG when a large summarization answer for $k=30$ but not a small answer for $k=10$. Correspondingly, the closeness distance of \AGG reduces significantly. Moreover,  \TS performs worse than \kVDOmodel in Figures \ref{fig.exp1_1}, \ref{fig.exp1_2} and \ref{fig.exp1_3}, due to that its objective is summarizing the changes between two trees but not exactly as our problem for a single tree. 
\revision{ }

\begin{table}[t]
\centering
\caption{Summary scores of \greedy and \DP, here $k = 25$.}\label{table.exp2}
\vspace{-0.3cm}
\scalebox{1.0}{
\begin{tabular}{c|c|c|c|c|c}
\toprule
Datasets& LATT& LNUR& ANIM& IMAGE& YAGO \\
\midrule
\greedy& 4,071& \textbf{5,048} & 18,628& 542,872& 1,492,101\\
\DP& \textbf{4,111} & \textbf{5,048} & \textbf{18,786}& \textbf{546,368}& \textbf{1,495,580}\\
\bottomrule
\end{tabular}
}
\end{table}

\stitle{Exp-2: Summary score comparison of greedy and optimal algorithms.}
We next conduct the effectiveness evaluation of our algorithm \greedy and \DP. Table~\ref{table.exp2} shows the summary scores of \greedy and \DP on five datasets. The exact algorithm \DP consistently outperforms \greedy on all datasets except \LNUR, verifying the effectiveness of our optimal solution against the greedy approach. 

\begin{figure}[t]
\centering
{
\subfigure[Summary score $\g(S)$]{
\label{fig.exp5_1}
\includegraphics[width=0.48\linewidth]{./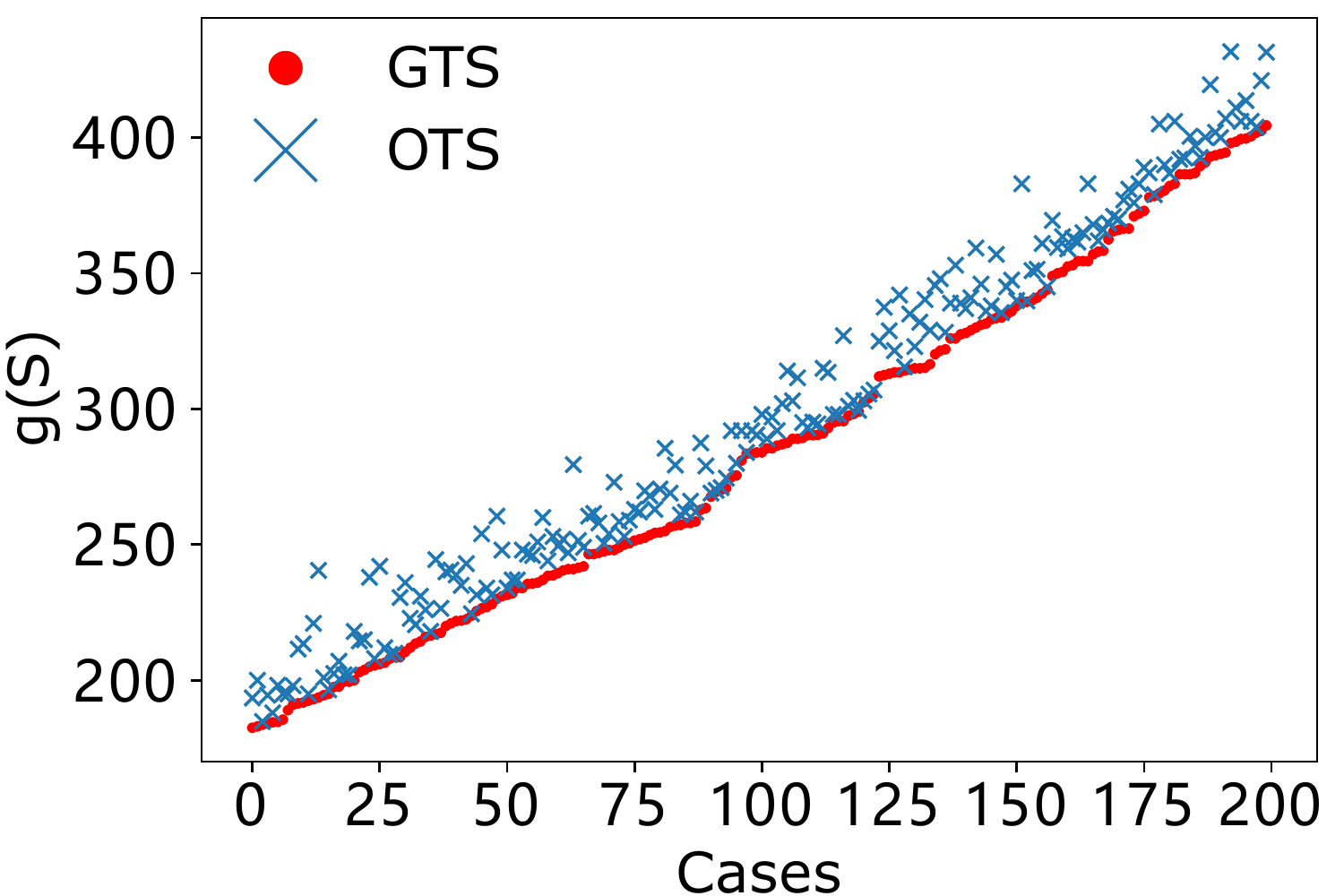} }
\subfigure[Running time(Seconds)]{
\label{fig.exp5_2}
\includegraphics[width=0.48\linewidth]{./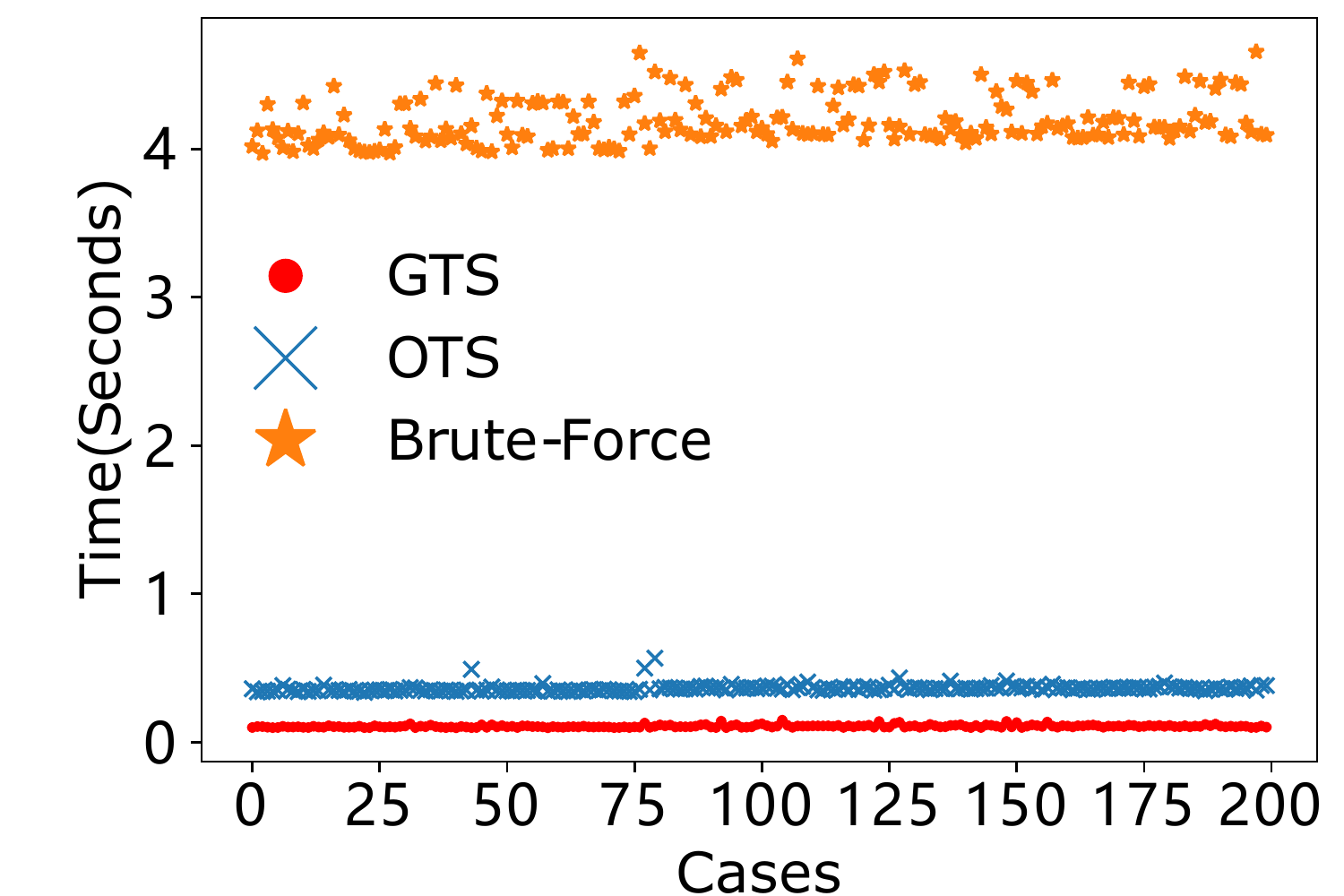} }
}
\vspace{-0.3cm}
\caption{Evaluation on 200 small synthetic datasets.}
\label{fig.exp5}
\end{figure}

\stitle{Exp-3: Approximation evaluation on small synthetic datasets.}
In this experiment, we evaluate the approximation of our algorithms w.r.t. the optimal answers. We randomly generate $200$ small-scale trees with $20$ nodes. 
We compare three methods of \greedy, \DP, and \brute. Note that \greedy produces no optimal solution in these cases. \DP and \brute always produce optimal answers.
Figure~\ref{fig.exp5_1} shows the summary score of three methods on $200$ cases. \DP gets the same solution of \brute, which verify the correctness of \DP.
As we can see that \DP wins the \greedy in all cases and \greedy achieves the average 95\%-approximation of optimal solutions. 
Figure~\ref{fig.exp5_2} shows the running time of three methods on all cases. \greedy and \DP run much faster than \brute, and \greedy is the winner. 

\subsection{Efficiency Evaluation}

\begin{figure*}[t]
\centering
{
\subfigure[\LATT]{
\label{fig.exp3_1}
\includegraphics[width=0.205\linewidth]{./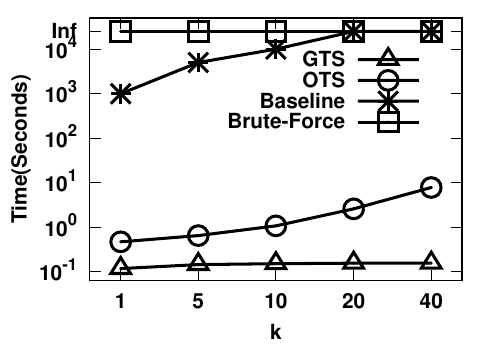} }\hskip -0.15in
\subfigure[\LNUR]{
\label{fig.exp3_2}
\includegraphics[width=0.205\linewidth]{./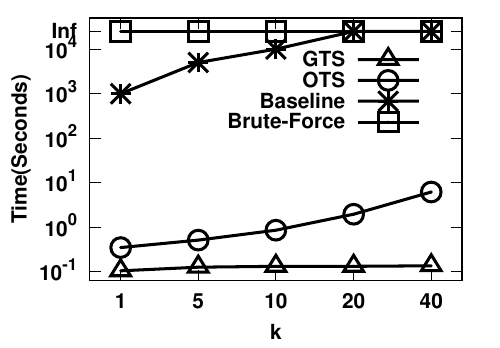} }\hskip -0.15in
\subfigure[\ANIM]{
\label{fig.exp3_3}
\includegraphics[width=0.205\linewidth]{./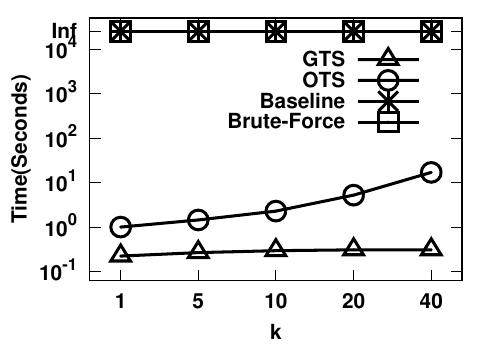} }\hskip -0.15in
\subfigure[\IMAGE]{
\label{fig.exp3_4}
\includegraphics[width=0.205\linewidth]{./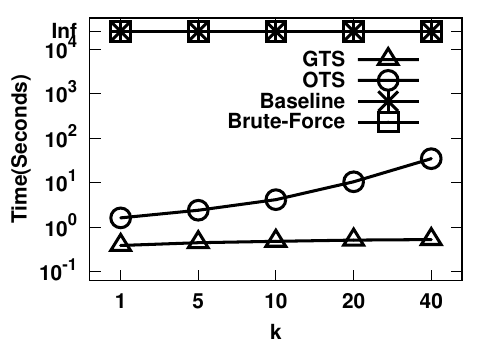} }\hskip -0.15in
\subfigure[\YAGO]{
\label{fig.exp3_5}
\includegraphics[width=0.205\linewidth]{./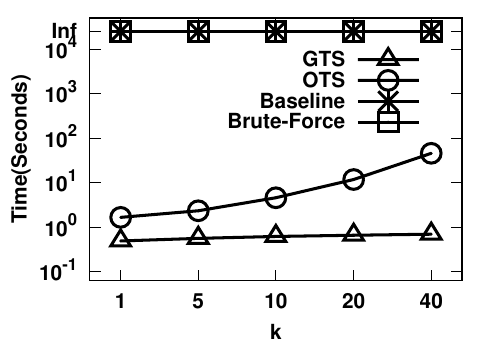} }
}
\vspace{-0.3cm}
\caption{Running time of different methods on all datasets.}
\label{fig.exp3}
\vspace{-0.5cm}
\end{figure*}

\stitle{Exp-4: Efficiency evaluation.}
We evaluate the running time of four methods \greedy, \DP, \Baseline, and \brute on all five datasets. \DP and \brute are optimal methods. \greedy and \Baseline are approximate methods. Figure~\ref{fig.exp3} shows running time of all methods when varying $k$. \greedy runs the fastest among them, which adopts an easy-to-compute greedy strategy. Interestingly, the efficiency of \DP is close to \greedy for small $k$ values. For the optimal methods, \DP runs much faster than \brute. Note that the running time results of \greedy and \DP are similar and scale well on large-scale tree datasets \IMAGE and \YAGO, as we invoke the \vtree algorithm to reduce the tree size of  \IMAGE and \YAGO into a very small one in $O(|\mathcal{I}|)$.

\begin{table}[t]
\centering
\caption{The size of new tree $T^*$ reduced by \vtree.}\label{table.exp4}
\vspace{-0.3cm}
\scalebox{1.0}{
\begin{tabular}{c|c|c|c|c|c}
\toprule
Datasets& LATT& LNUR& ANIM& IMAGE& YAGO \\
\midrule
$|T|$& 4,226& 4,226 & 15,135& 73,298& 493,839\\
$|\mathcal{I}|$& 960 & 771 & 4,350& 5,000& 10,000\\
$|T^*|$& 1,233 & 994 & 4,373& 6,402& 14,131\\
\bottomrule
\end{tabular}
}
\end{table}

\stitle{Exp-5: The size of reduced tree by Vtree.}
To verify the effectiveness of \vtree in Algorithm~\ref{algo:vtree}, we report the size of new trees $T^*$ reduced from $T$ on all real-world datasets. Table \ref{table.exp4} shows the size of original tree as $|T|$, the number of nodes with positive weights as $|\mathcal{I}|$, and the size of new tree $|T^*|$ by \vtree. The size of new tree $|T^*|$ is much smaller than the original tree size $|T|$. $|T^*|$ is also smaller than two times of $|\mathcal{I}|$, which confirms the results of Theorem~\ref{theorem.treesize}.

\begin{figure}[t]
\centering
{
\subfigure[Compute $\triangle_{g}(x|S)$]{
\label{fig.exp6_1}
\includegraphics[width=0.48\linewidth]{./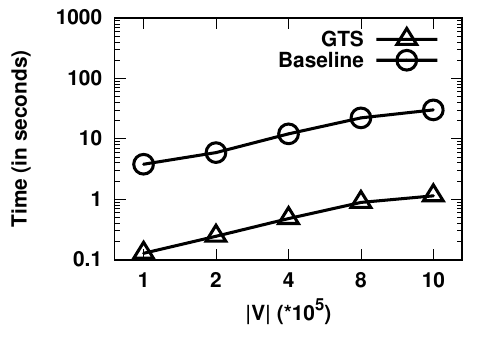} }
\subfigure[Tree summarization]{
\label{fig.exp6_2}
\includegraphics[width=0.48\linewidth]{./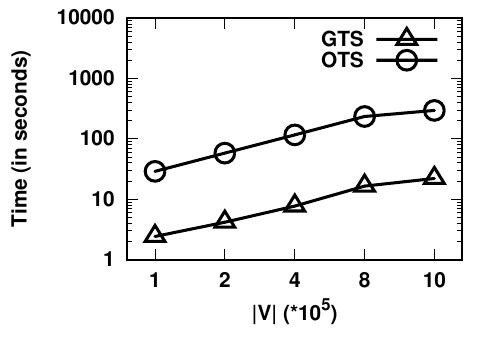} }
}
\caption{Scalability test on large synthetic datasets for $k=10$.}
\label{fig.exp6}
\end{figure}

\stitle{Exp-6: Scalability test.} In this experiment, we evaluate the scalability of \greedy and \DP by varying the size of tree $|V|$. We randomly generate 5 trees with size varying from $10^5$ to $10^6$, whose data statistics follow the real dataset \LATT.  We set the parameter $k = 10$. 
First, we test the scalability of computing  $\triangle_{g}(x|S)$. Note that the operation of  $\triangle_{g}(x|S)$ is to compute the marginal gain of summary scores, which is only used in greedy algorithms but not the global optimal \DP method. Thus, we only compare two greedy methods \Baseline and \greedy here. The running time results of computing  $\triangle_{g}(x|S)$ by Baseline and \greedy are shown in Figure \ref{fig.exp6_1}.
As we can see, \greedy is scalable very well with the increased size of tree nodes $|V|$.
Moreover, \greedy is much more efficient than \Baseline, which verifies the efficiency of fast computing $\triangle_{g}(x|S)$ in Algorithm~\ref{algo:gain}. 
Next, we evaluate the scalability of tree summarization by \greedy and \DP. 
 Figure~\ref{fig.exp6_2} reports the running time results on the increased tree datasets. As expected,  \greedy and \DP take longer time with the increasing $|V|$ stably, indicating that both methods scale well with a larger $|V|$.

\subsection{Case Study and Usability Evaluation}

\begin{figure}[t]
\centering
{
\subfigure{
\includegraphics[width=0.98\linewidth]{./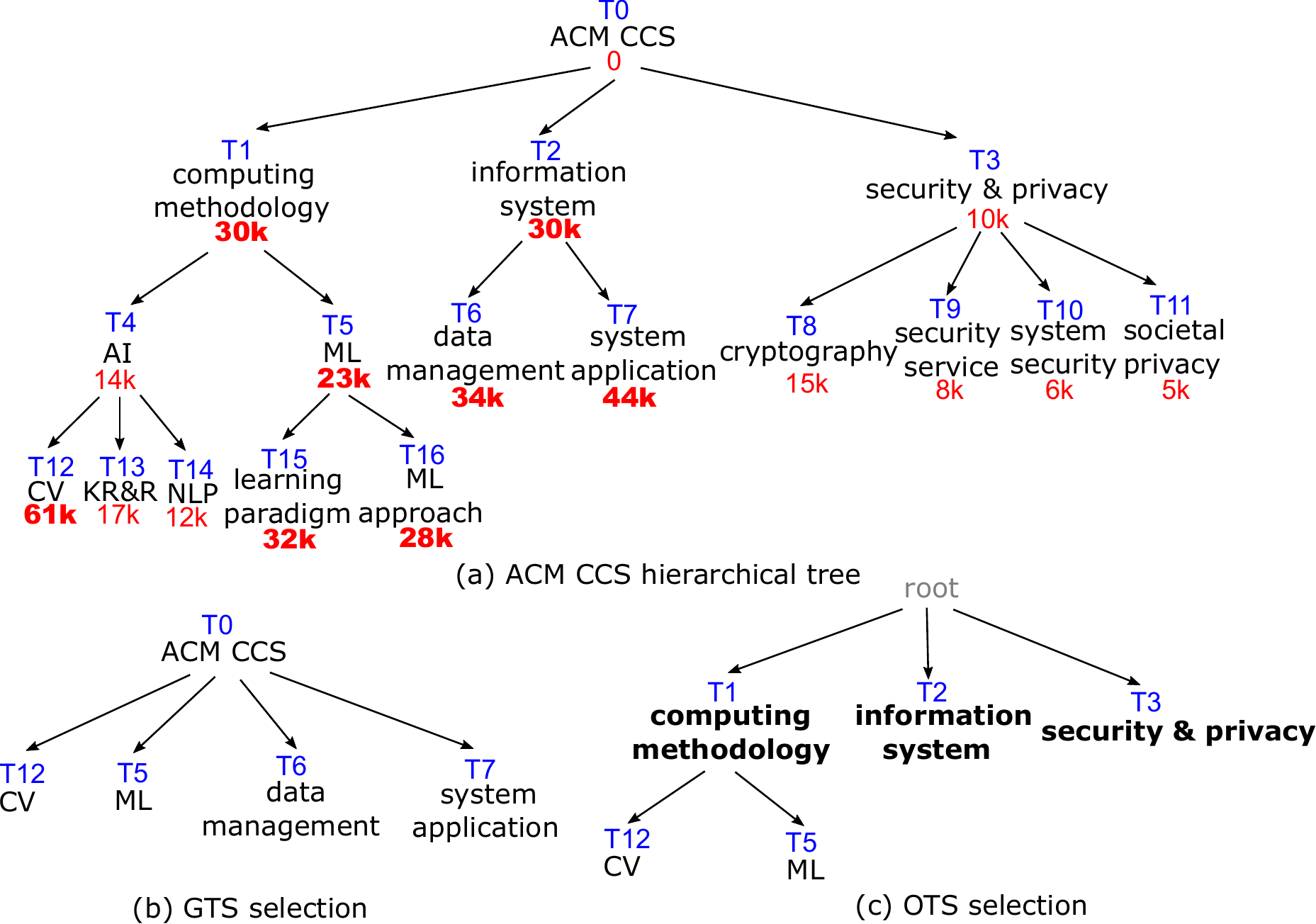} }
}
\vspace{-0.3cm}
\caption{A case study of tree summarization in ACM Computing Classification System dataset. The vertex weight is the cumulative number of papers published under the corresponding topic. The topic that has more than 20,000 papers, is depicted in red bold. Figure~\ref{fig.exp7}(b) and \ref{fig.exp7}(c) are graph visualization of top-$k$ summarization results by \greedy and \DP, respectively. Here, $k = 5$.}
\label{fig.exp7}
\end{figure}

\begin{figure}[t]
\centering
{
\subfigure{
\includegraphics[width=0.4\linewidth]{./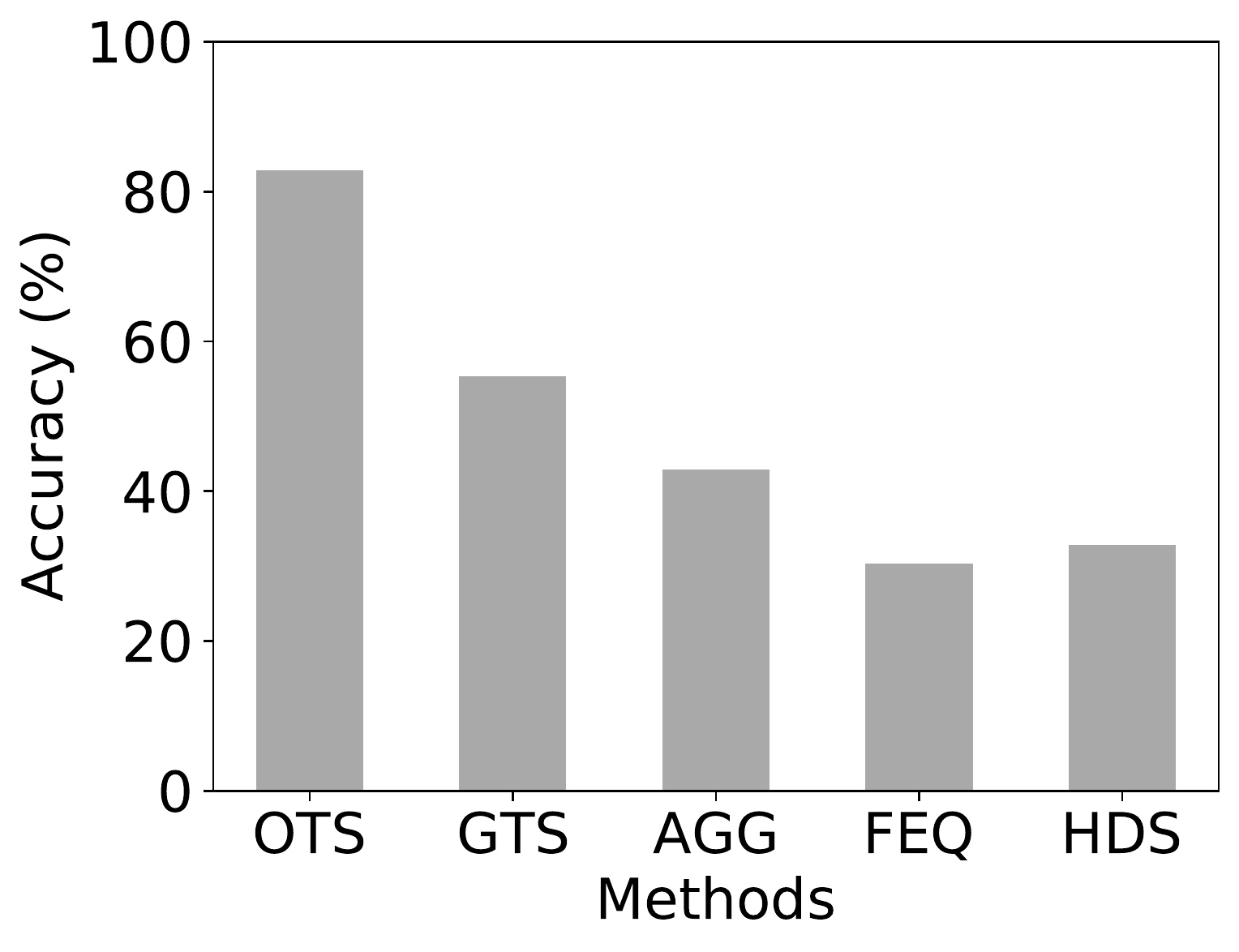} }
}
\vspace{-0.3cm}
\caption{Usability evaluation of different methods for top-$k$ attractive topic recommendation on ACM CCS dataset in Figure~\ref{fig.exp7}.}
\label{fig.exp7_3}
\vspace{-0.4cm}
\end{figure}

In this experiment, we conduct one case study and one  usability evaluation to validate the practical usefulness of our tree summarization model and algorithms. We construct a new real-world dataset with weighted terminologies from the ACM Computing Classification System (ACM CCS)~\cite{ACMCCS}, as shown in Figure~\ref{fig.exp7}(a). The hierarchical tree has 17 vertices, where each vertex represents a topic. An edge between two topics represents that the parent vertex is a generalized concept of its children topics; A child vertex is an instance concept of its parent topic. For example, the ``computing methodology'' topic (denoted as T1) generalizes two subcategories of `Artificial Intelligence' (denoted as T4, `AI') and `Machine Learning' (denoted as T5, `ML'). Moreover, each vertex is associated with a weight, representing the cumulative number of papers published under its topic in ACM DL~\cite{ACMCCS} from 2011 to 2021. For example, there exist 60,842 published papers related to `Computer Version' topic (denoted as T12, `CV'), i.e., $\feq(T12)=61\textbf{k}$. 
The higher the vertex weight, the more attractive the topic.

\stitle{Exp-7: Case study of summary topics in graph visualization.} We apply our summarization methods \greedy and \DP on the constructed ACM CCS dataset above. We set a small parameter $k=5$ to use only 5 topics to summarize the whole topic tree. 
Figures~\ref{fig.exp7}(b) and~\ref{fig.exp7}(c) show the topic selections of \greedy and \DP, respectively. \DP selects five vertices T1, T2, T3, T5, and T12, which cover three general attractive topics `computing methodology', `information system', `security \& privacy', and two attractive topics `ML' and `CV'. Note that we add a virtual root to connect with three vertices T1, T2, and T3, which follows the methodology of graph visualization in Section~\ref{sec.visualization}. On the other hand, \greedy selects a different answer of five vertices T0, T5, T6, T7, and T12 as shown in Figure~\ref{fig.exp7}(b). This greedy method always first selects T0 no matter what kinds of parameter setting on $k$. The greedy summary result in Figure~\ref{fig.exp7}(b) has one obvious shortcoming that cannot cover the topic of `security \& privacy' (denoted as T3). In addition, the summary score of \greedy is $253,079$, which is smaller than the answer of \DP with $\g(S) = 257,756$. Thus, the greedy selection is worse than the optimal answer in Figure~\ref{fig.exp7}(c), which has a more diverse coverage of different important topics and a larger summarization score.  

\stitle{Exp-8: Usability evaluation of summary topics.}  We conduct the usability evaluation for top-$k$ attractive topic recommendation, which selects $k$ topics to summarize attractive topics in ACM CCS dataset. We apply five methods \DP, \greedy, \AGG, \FEQ, and \TS on ACM CCS dataset. Note that \AGG and \CAGG select the same topics, thus only the topic selections of \AGG are reported here. We set $k=5$ and conduct a survey investigation.  Specifically, we ask 20 users, who are familiar with academic research and computer science topics. We request them to recommend top-5 most attractive topics of ACM CCS dataset in Figure~\ref{fig.exp7}(a). We evaluate an accuracy rate of matching topics between the users' choices and the methods' selections. Figure~\ref{fig.exp7_3} reports the average accuracy rates for all methods. Our method \DP achieves an accuracy rate of 82.5\%, which is the best performance among all methods. \greedy achieves the accuracy of 55\%, while other methods achieve no greater than an accuracy of 42.5\%. This usability evaluation validates the usefulness of our methods in attractive topics summarization on ACM CCS dataset.

\section{Conclusion and Future Work} \label{sec.con}
In this paper, we motivate and study the tree summarization problem to select $k$ representative vertices to summarize a weighted tree. We first propose an efficient greedy algorithm \greedy with quality guarantee. In addition, we develop an optimal algorithm \DP based on dynamic programming techniques to find exact answers in polynomial time.  We also propose an efficient tree reduction technique to improve efficiency of both \greedy and \DP. 
Extensive experiments on real-world datasets demonstrate the superiority of our proposed algorithms against state-of-the-art methods. This paper also opens up several interesting problems. One challenging direction is how to generate the node weights in a hierarchy for tree summarization. In the application of terminology search, the node weight is regarded as the occurrence of a certain terminology. However, users may not input an exact terminology every time. 
Such unmatching terminologies and alternative names desire to be resolved by string matching and semantic matching.

\ifCLASSOPTIONcompsoc
 \section*{Acknowledgments}
\else
 \section*{Acknowledgment}
\fi
This work is supported by HK RGC Grants Nos. 12200021, 12202221, 12201520,  22200320, 12201119, and 12201518.

\ifCLASSOPTIONcaptionsoff
  \newpage
\fi

\bibliographystyle{IEEEtran}
{
\bibliography{DAG,truss}
}

\balance

\end{document}